
\documentclass[a4paper,11pt,UKenglish]{article}

\usepackage{bm,mathptmx,booktabs}
\usepackage{tcolorbox}
\usepackage[usestackEOL]{stackengine}

\usepackage[normalem]{ulem} 
\usepackage{thm-restate}
\usepackage{amsmath,amssymb}

\DeclareMathAlphabet{\mathcal}{OMS}{cmsy}{m}{n}

\usepackage{tikz}
\usetikzlibrary{decorations.pathreplacing}

\usepackage{ifthen}
\usepackage{comment}
\makeatletter
\pdfminorversion=5

\usepackage[sort,comma,square,numbers]{natbib}
\renewcommand*{\NAT@spacechar}{~} %
\renewcommand\bibsection %
{
  \section*{\refname
    \@mkboth{\MakeUppercase{\refname}}{\MakeUppercase{\refname}}}
}

\usepackage{color,tikz,environ}
\usetikzlibrary{decorations.markings}

\usepackage{etoolbox}

\usepackage{titlesec}

\setcounter{secnumdepth}{4}

\titleformat{\paragraph}
{\normalfont\normalsize\bfseries}{\theparagraph}{1em}{}
\titlespacing*{\paragraph}
{0pt}{3.25ex plus 1ex minus .2ex}{1.5ex plus .2ex}

\makeatletter
\makeatother

\newcommand{\boundellipse}[3]%
{(#1) ellipse (#2 and #3)
}

\usepackage[margin=1in]{geometry}

\ifthenelse{\isundefined{\lipics}}
{
\usepackage{graphicx}
\usepackage[colorlinks=true,bookmarks=false]{hyperref}
\usepackage[small,bf]{caption}
\usepackage{subfigure}
\usepackage[british]{babel}
}
{}
\usepackage{xcolor}
\definecolor{darkblue}{rgb}{0,0,0.45}
\definecolor{darkred}{rgb}{0.6,0,0}
\definecolor{darkgreen}{rgb}{0.13,0.5,0}
\hypersetup{colorlinks, linkcolor=darkblue, citecolor=darkgreen,
urlcolor=darkblue}

\ifthenelse{\isundefined{\llncs}}{
  \usepackage{amsthm}
  \usepackage{authblk}
  
}

\usepackage{amsmath,amsfonts,amssymb}
\usepackage{mathtools}
\usepackage{mathrsfs}
\usepackage{wrapfig}
\usepackage{xspace}
\usepackage[shortlabels]{enumitem}
\setlist[enumerate]{nosep} %
\setlist[itemize]{nosep} %

\usepackage{multirow}

\usepackage[protrusion=true]{microtype}

\usepackage{aliascnt}
\ifthenelse{\isundefined{\llncs}}{
  \theoremstyle{plain}
}{}
\newtheorem{theorem}{Theorem}
\newaliascnt{lemma}{theorem}
\newaliascnt{corollary}{theorem}
\newaliascnt{definition}{theorem}
\newaliascnt{claim}{theorem}
\newaliascnt{proposition}{theorem}
\newaliascnt{remark}{theorem}
\newaliascnt{hypothesis}{theorem}
\newaliascnt{observation}{theorem}
\newaliascnt{conjecture}{theorem}
\newtheorem{lemma}[lemma]{Lemma}
\newtheorem{claim}[claim]{Claim}

\newtheorem{remark}[remark]{Remark}

\ifthenelse{\isundefined{\llncs}}{
  \theoremstyle{definition}
}{}
\newtheorem{definition}[definition]{Definition}
\aliascntresetthe{lemma}
\aliascntresetthe{remark}
\aliascntresetthe{corollary}
\aliascntresetthe{proposition}
\aliascntresetthe{definition}
\aliascntresetthe{claim}
\aliascntresetthe{hypothesis}
\aliascntresetthe{observation}

\newcommand{\ignore}[1]{}
\usepackage{environ,xstring}
\newif\iflabel
\newif\ifdbs
\newif\ifamp
\NewEnviron{doitall}{%
  \noexpandarg
  \expandafter\IfSubStr\expandafter{\BODY}{\label}{\labeltrue}{\labelfalse}%
  \expandafter\IfSubStr\expandafter{\BODY}{\\}{\dbstrue}{\dbsfalse}%
  \expandafter\IfSubStr\expandafter{\BODY}{&}{\amptrue}{\ampfalse}%
  \iflabel\def\doitallstar{}\else\def\doitallstar{*}\fi
  \ifdbs
    \ifamp
      \def\doitallname{align}%
    \else
      \def\doitallname{multline}%
    \fi
  \else
    \def\doitallname{equation}
  \fi
  \begingroup\edef\x{\endgroup
    \noexpand\begin{\doitallname\doitallstar}%
    \noexpand\BODY
    \noexpand\end{\doitallname\doitallstar}%
  }\x
}
\def\[#1\]{\begin{doitall}#1\end{doitall}}

\newcommand{\newreptheorem}[2]{\newtheorem*{rep@#1}{\rep@title}\newenvironment{rep#1}[1]{\def\rep@title{\bf #2 \ref*{##1}}\begin{rep@#1}}{\end{rep@#1}}}

\makeatother

\newreptheorem{theorem}{Theorem}
\newreptheorem{lemma}{Lemma}
\newreptheorem{corollary}{Corollary}

\newtheorem*{rep@thm}{\rep@title} \newcommand{\newrepthm}[2]{%
\newenvironment{rep#1}[1]{%
\def\rep@title{\autoref{##1}}%
\begin{rep@thm} }%
{\end{rep@thm} } }
\makeatother
\newrepthm{thm}{}
\newrepthm{lem}{}
\newrepthm{crl}{}

\usepackage{framed}

\usepackage{color,tikz,environ}
\usetikzlibrary{decorations.markings,arrows,plotmarks}
\tikzset{nomorepostaction/.code={\let\tikz@postactions\pgfutil@empty}}

\tikzset{middlearrow/.style={
        decoration={markings,
            mark= at position 0.5 with {\arrow{#1}} ,
        },
        postaction={decorate}
    }
}

\tikzset{onethirdarrow/.style={
        decoration={markings,
            mark= at position 0.33 with {\arrow{#1}} ,
        },
        postaction={decorate}
    }
}

\tikzset{twothirdarrow/.style={
        decoration={markings,
            mark= at position 0.67 with {\arrow{#1}} ,
        },
        postaction={decorate}
    }
}

\tikzset{endarrow/.style={
        decoration={markings,
            mark= at position 0.9 with {\arrow{#1}} ,
        },
        postaction={decorate}
    }
}

\tikzset{startarrow/.style={
        decoration={markings,
            mark= at position 0.1 with {\arrow{#1}} ,
        },
        postaction={decorate}
    }
}

\date{}
\title{Tight Lower Bounds for Approximate \& Exact $k$-Center in $\mathbb{R}^d$\thanks{An extended abstract of this paper appears in SoCG 2022.}}

\author[1]{Rajesh~Chitnis
}
\author[2]{Nitin Saurabh}
\affil{School of Computer Science, University of Birmingham, UK. Email:
\texttt{rajeshchitnis@gmail.com}}
\affil[2]{Indian Institute of Technology Hyderabad, Sangareddy, India. \texttt{nitin@cse.iith.ac.in}}

\newcommand{\core}{\textsc{Core}}
\newcommand{\border}{\textsc{Border}}

\newcommand{\kcenter}{\textsc{Center}\xspace}
\newcommand{\kmedian}{\textsc{Median}\xspace}
\newcommand{\kmeans}{\textsc{Means}\xspace}
\newcommand{\OPT}{\text{OPT}\xspace}
\newcommand{\dist}{\texttt{dist}}

\newcommand{\geqcsp}{$\geq$-\text{CSP}\xspace}
\newcommand{\leqcsp}{$\leq$-\text{CSP}\xspace}
\newcommand{\bolda}{\mathbf{a}}
\newcommand{\ba}{\mathbf{a}}
\newcommand{\boldb}{\mathbf{b}}
\newcommand{\bb}{\mathbf{b}}
\newcommand{\boldaprime}{\mathbf{a'}}
\newcommand{\boldx}{\mathbf{x}}
\newcommand{\bx}{\mathbf{x}}
\newcommand{\by}{\mathbf{y}}

\newcommand{\bp}{\mathbf{p}}
\newcommand{\boldy}{\mathbf{y}}

\newcommand{\bq}{\mathbf{q}}

\newcommand{\bolde}{\mathbf{e}}
\newcommand{\be}{\mathbf{e}}
\newcommand{\R}{\textup{R}}

\newcommand{\VV}{\mathcal{V}}
\newcommand{\DD}{\mathcal{D}}
\newcommand{\CC}{\mathcal{C}}
\newcommand{\bone}{\mathbf{1}}

\usepackage{graphicx}
\usepackage{xspace,amssymb,amsmath,amsfonts}
\usepackage{multirow}

\usepackage{hyperref}

\usepackage{tikz,environ}
\usetikzlibrary{decorations.pathreplacing}
\usetikzlibrary{decorations.markings,arrows}
\tikzset{middlearrow/.style={
        decoration={markings,
            mark= at position 0.5 with {\arrow{#1}} ,
        },
        postaction={decorate}
    }
}

\usepackage[strings]{underscore}

\AtBeginDocument{\def\sectionautorefname{Section}}
\AtBeginDocument{\def\subsectionautorefname{Section}}
\AtBeginDocument{\def\subsubsectionautorefname{Section}}

\definecolor{darkblue}{rgb}{0,0,1}
\definecolor{darkred}{rgb}{0.6,0,0}
\definecolor{darkgreen}{rgb}{0,1,0}
\hypersetup{colorlinks, linkcolor=darkblue, citecolor=darkblue,
urlcolor=darkblue}


\setcounter{secnumdepth}{3}

\begin{document}
\renewcommand*{\sectionautorefname}{Section}
\renewcommand*{\subsectionautorefname}{Section}
\renewcommand*{\subsubsectionautorefname}{Section}

\maketitle

\begin{abstract}
In the discrete $k$-\kcenter problem, we are given a metric space $(P,\dist)$ where $|P|=n$ and the goal is to select a set $C\subseteq P$ of $k$ centers which minimizes the maximum distance of a point in $P$ from its nearest center. For any $\epsilon>0$, Agarwal and Procopiuc [SODA '98, Algorithmica '02] designed an $(1+\epsilon)$-approximation algorithm\footnote{The algorithm of Agarwal and Procopiuc~\cite{agarwal-procopiuc} also works for the non-discrete, i.e., continuous, version of the problem when $C$ need not be a subset of $P$, but our lower bounds only hold for the discrete version.} for this problem in $d$-dimensional Euclidean space\footnote{The algorithm of Agarwal and Procopiuc~\cite{agarwal-procopiuc} also works for other metrics such as $\ell_{\infty}$ or $\ell_q$ metric for $q\geq 1$. Our construction also works for $\ell_{\infty}$ (in fact, some of the bounds are simpler to derive!) but we present only the proof for $\ell_2$ to keep the presentation simple.} which runs in $O(dn\log k) + \left(\dfrac{k}{\epsilon}\right)^{O\left(k^{1-1/d}\right)}\cdot n^{O(1)}$ time. In this paper we show that their algorithm is essentially optimal: if for some $d\geq 2$ and some computable function $f$, there is an $f(k)\cdot \left(\dfrac{1}{\epsilon}\right)^{o\left(k^{1-1/d}\right)} \cdot n^{o\left(k^{1-1/d}\right)}$ time algorithm for $(1+\epsilon)$-approximating the discrete $k$-\kcenter on $n$ points in $d$-dimensional Euclidean space then the Exponential Time Hypothesis (ETH) fails.

We obtain our lower bound by designing a gap reduction from a $d$-dimensional constraint satisfaction problem (CSP) to discrete $d$-dimensional $k$-\kcenter. This reduction has the property that there is a fixed value $\epsilon$ (depending on the CSP) such that the optimal radius of $k$-\kcenter instances corresponding to satisfiable and unsatisfiable instances of the CSP is $<1$ and $\geq (1+\epsilon)$ respectively. Our claimed lower bound on the running time for approximating discrete $k$-\kcenter in $d$-dimensions then follows from the lower bound due to Marx and Sidiropoulos [SoCG '14] for checking the satisfiability of the aforementioned $d$-dimensional CSP.

As a byproduct of our reduction, we also obtain that the exact algorithm of Agarwal and Procopiuc [SODA '98, Algorithmica '02] which runs in $n^{O\left(d\cdot k^{1-1/d}\right)}$ time for discrete $k$-\kcenter on $n$ points in $d$-dimensional Euclidean space is asymptotically optimal. Formally, we show that if for some $d\geq 2$ and some computable function $f$, there is an $f(k)\cdot n^{o\left(k^{1-1/d}\right)}$ time exact algorithm for the discrete $k$-\kcenter problem on $n$ points in $d$-dimensional Euclidean space then the Exponential Time Hypothesis (ETH) fails.  Previously, such a lower bound was only known for $d=2$ and was implicit in the work of Marx [IWPEC '06].

\end{abstract}

\section{Introduction}

The $k$-\kcenter problem is a classical problem in theoretical computer science and was first formulated by Hakimi~\cite{hakimi} in 1964. In this problem, given a metric space $(P,\dist)$ and an integer $k\leq |P|$ the goal is to select a set $C$ of $k$ centers which minimizes the maximum distance of a point in $P$ from its nearest center, i.e., select a set $C$ which minimizes the quantity $\max_{p\in P} \min_{c\in C} \dist(p,c)$. A geometric way to view the $k$-\kcenter problem is to find the minimum radius $r$ such that $k$ closed balls of radius $r$ located at each of the points in $C$ cover all the points in $P$. In most applications, we require that $C\subseteq P$ and this is known as the discrete version of the problem.

As an example, one can consider the set $P$ to be important locations in a city and solving the $k$-\kcenter problem (where $k$ is upper bounded by budget constraints) establishes the locations of fire stations which minimize the response time in event of a fire. In addition to other applications in facility location, transportation networks, etc. an important application of $k$-\kcenter is in clustering. With the advent of massive data sets, the problem of efficiently and effectively summarizing this data is crucial. A standard approach for this is via centroid-based clustering algorithms of which $k$-\kcenter is a special case. Clustering using $k$-\kcenter has found applications in text summarization, robotics, bioinformatics, pattern recognition, etc ~\cite{clustering-text,clustering-robotics,hennig2015handbook,jiang2004cluster}.

\subsection{Prior work on exact \& approximate algorithms for discrete $k$-\kcenter}

The discrete\footnote{Here we mention the known results only for the discrete version of $k$-\kcenter. A discussion about results for the continuous version of the problem is given in~\autoref{subsec:continuous-k-center-discussion}.} $k$-\kcenter problem is NP-hard~\cite{vazirani-book}, and admits a $2$-approximation~\cite{DBLP:journals/mor/HochbaumS85,DBLP:journals/tcs/Gonzalez85} in $n^{O(1)}$ time where $n$ is the number of points. This approximation ratio is tight and the $k$-\kcenter problem is NP-hard to approximate in polynomial time to a factor $(2-\epsilon)$ for any constant $\epsilon>0$~\cite{DBLP:journals/dam/HsuN79,DBLP:journals/tcs/Gonzalez85}. Given this intractability, research was aimed at designing parameterized algorithms~\cite{fpt-book} and parameterized approximation algorithms for $k$-center. The $k$-\kcenter problem is W[2]-hard to approximate to factor better than $2$ even when allowing running times of the form $f(k)\cdot n^{O(1)}$ for any computable function $f$~\cite{DBLP:journals/algorithmica/Feldmann19,DBLP:journals/talg/DemaineFHT05}. The $k$-\kcenter problem remains W[2]-hard even if we combine the parameter $k$ with other structural parameters such as size of vertex cover or size of feedback vertex set~\cite{DBLP:journals/dam/KatsikarelisLP19}. Agarwal and Procopiuc~\cite{agarwal-procopiuc} designed an algorithm for discrete $k$-\kcenter on $n$ points in $d$-dimensional Euclidean space which runs in $n^{O\left(d\cdot k^{1-1/d}\right)}$ time.

The paradigm of combining parameterized algorithms \& approximation algorithms has been successful in designing algorithms for $k$-center in special topologies such as $d$-dimensional Euclidean space~\cite{agarwal-procopiuc}, planar graphs~\cite{DBLP:conf/soda/Fox-EpsteinKS19}, metrics of bounded doubling dimensions~\cite{DBLP:journals/algorithmica/FeldmannM20}, graphs of bounded highway dimension~\cite{DBLP:journals/algorithmica/Feldmann19,DBLP:conf/esa/BeckerKS18}, etc. Of particular relevance to this paper is the $(1+\epsilon)$-approximation algorithm\footnote{This is also known as an efficient parameterized approximation scheme (EPAS) as the running time is a function of the type $f(k,\epsilon,d)\cdot n^{O(1)}$.} of Agarwal and Procopiuc~\cite{agarwal-procopiuc} which runs in $O(dn\log k) + \left(\dfrac{k}{\epsilon}\right)^{O\left(k^{1-1/d}\right)}\cdot n^{O(1)}$ time. This was generalized by Feldmann and Marx~\cite{DBLP:journals/algorithmica/FeldmannM20} who designed an $(1+\epsilon)$-approximation algorithm running in $\left(\dfrac{k^k}{\epsilon^{O(kD)}}\right)\cdot n^{O(1)}$ time for discrete $k$-\kcenter in metric spaces of doubling dimension $D$.

\subsection{From 2-dimensions to higher dimensions}

\subparagraph*{Square root phenomenon for planar graphs and geometric problems in the plane:} For a wide range of problems on planar graphs or geometric problems in the plane, a certain {\em square root phenomenon} is observed for a wide range of algorithmic problems: the exponent of the running time can be improved from $O(\ell)$ to $O(\sqrt{\ell})$ where $\ell$ is the parameter, or from $O(n)$ to $O(\sqrt{n}$) where $n$ is in the input size, and lower bounds indicate that this improvement is essentially best possible. There is an ever increasing list of such problems known for planar graphs~\cite{DBLP:journals/siamcomp/ChitnisFHM20,DBLP:conf/icalp/Marx12,DBLP:conf/icalp/KleinM12,MarxPP-FOCS2018,KleinM14,DemaineFHT05,DBLP:conf/stacs/PilipczukPSL13,DBLP:conf/esa/MarxP15,DBLP:conf/fsttcs/LokshtanovSW12,DBLP:journals/corr/AboulkerBHMT15,FominLMPPS16}
and in the plane~\cite{DBLP:conf/esa/MarxP15,DBLP:conf/iwpec/Marx06,FominKLPS16,DBLP:journals/jal/AlberF04,DBLP:conf/focs/SmithW98,DBLP:journals/algorithmica/HwangLC93,DBLP:journals/algorithmica/HwangCL93}

\subparagraph*{Bounds for higher dimensional Euclidean spaces:} Unlike the situation on planar graphs and in two-dimensions, the program of obtaining tight bounds for higher dimensions is still quite nascent with relatively fewer results~\cite{bane,DBLP:conf/compgeom/MarxS14,biro-higher-d,de-berg-higher-d,tsp-higher-d}. Marx and Sidiropoulos~\cite{DBLP:conf/compgeom/MarxS14} showed that for some problems there is a \emph{limited blessing of low dimensionality}: \textcolor{black}{that is,} for $d$-dimensions the running time can be improved from $n^{\ell}$ to $n^{\ell^{1-1/d}}$ or from $2^{n}$ to $2^{n^{1-1/d}}$ where $\ell$ is a parameter and $n$ is the input size. In contrast, Cohen-Addad et al.~\cite{bane} showed that the two problems of $k$-\kmedian and $k$-\kmeans suffer from the \emph{curse of low dimensionality}: even for $4$-dimensional Euclidean space, assuming the Exponential Time Hypothesis\footnote{Recall that the Exponential Time Hypothesis (ETH) has the consequence that $n$-variable 3-SAT cannot be solved in $2^{o(n)}$ time~\cite{eth,eth-2}.} (ETH), there is no $f(k)\cdot n^{o(k)}$ time algorithm, i.e., the brute force algorithm which runs in $n^{O(k)}$ time is asymptotically optimal.

\subsection{Motivation \& Our Results}

In two-dimensional Euclidean space there is an $n^{O(\sqrt{k})}$ algorithm~\cite{agarwal-procopiuc,DBLP:journals/algorithmica/HwangLC93,DBLP:journals/algorithmica/HwangCL93}, and a matching lower bound of $f(k)\cdot n^{o(\sqrt{k})}$ under Exponential Time Hypothesis (ETH) for any computable function $f$~\cite{DBLP:conf/iwpec/Marx06}. Our motivation in this paper is to investigate what is the \emph{correct} complexity of exact and approximate algorithms for the discrete $k$-\kcenter for higher dimensional Euclidean spaces. In particular, we aim to answer the following two questions:
\begin{table}[ht]
\noindent\framebox{\begin{minipage}{\textwidth}
\begin{description}
  \item[(Question 1)] Can the running time of the $(1+\epsilon)$-approximation algorithm of \cite{agarwal-procopiuc} be improved from $O(dn\log k) + \left(\dfrac{k}{\epsilon}\right)^{O\left(k^{1-1/d}\right)}\cdot n^{O(1)}$, or is there a (close to) matching lower bound?
  \item[(Question 2)] The $n^{O\left(d\cdot k^{1-1/d}\right)}$ algorithm of \cite{agarwal-procopiuc} for $d$-dimensional Euclidean space shows that there is a \emph{limited blessing of low dimensionality} for $k$-\kcenter. But can the term $k^{1-1/d}$ in the exponent be improved, or is it asymptotically tight?
\end{description}
\end{minipage}}
\end{table}

\noindent We make progress towards answering both these questions by showing the following theorem:

\begin{restatable}{theorem}{domsetd}
\normalfont
For any $d\geq 2$, under the Exponential Time Hypothesis (ETH), the discrete $k$-\kcenter problem in $d$-dimensional Euclidean space
\begin{description}

  \item[\textbf{- (Inapproximability result)}] does not admit an $(1+\epsilon)$-approximation in $f(k)\cdot \left(\frac{1}{\epsilon}\right)^{o\left(k^{1-1/d}\right)}\cdot n^{o\left(k^{1-1/d}\right)}$ time where $f$ is any computable function and $n$ is the number of points.

  \item[\textbf{- (Lower bound for exact algorithm)}] cannot be solved in $f(k)\cdot n^{o\left(k^{1-1/d}\right)}$ time where $f$ is any computable function and $n$ is the number of points.
\end{description}
\label{thm:dom-set-d-dimensions}
\end{restatable}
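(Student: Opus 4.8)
The plan is to construct a gap-preserving reduction from the $d$-dimensional CSP of Marx and Sidiropoulos~\cite{DBLP:conf/compgeom/MarxS14} — equivalently, the \gtleq-style constraint problem — to the discrete $k$-\kcenter problem in $\mathbb{R}^d$, in such a way that YES-instances map to $k$-\kcenter instances of optimal radius strictly less than $1$, and NO-instances map to instances of optimal radius at least $(1+\epsilon)$ for some absolute constant $\epsilon>0$ depending only on $d$. Since the Marx--Sidiropoulos lower bound rules out (under ETH) an $f(k)\cdot n^{o(k^{1-1/d})}$-time algorithm for the $d$-dimensional CSP on $n$ cells with parameter $k$, and our reduction will produce a $k$-\kcenter instance with $N=n^{O(1)}$ points and $k'=O(k)$ centers (crucially, $k'$ is \emph{linear} in the CSP parameter, so that $k'^{1-1/d}=\Theta(k^{1-1/d})$), both the inapproximability statement and the exact lower bound follow simultaneously: an approximation or exact algorithm faster than the claimed bound would solve the CSP too fast.

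The heart of the reduction is a geometric gadget. First I would arrange the "cells'' of the $d$-dimensional CSP grid into a scaled integer sublattice of $\mathbb{R}^d$, so that each cell occupies its own region and cells are pairwise far apart relative to the radius $1$. For each cell I place a cluster of points — one point (or a small cloud) for each value the CSP variable of that cell may take — positioned so that a single center of radius $<1$ can cover the whole cluster if and only if that center is "aligned'' with a chosen value; the constraint structure of the CSP (agreement of neighbouring cells on a shared coordinate) is then encoded by auxiliary \connectors{} points placed between adjacent cell-clusters, arranged so that they are covered by radius $<1$ exactly when the two neighbouring centers encode consistent values, and otherwise force radius $\ge 1+\epsilon$. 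This mirrors the standard \textsc{Grid-Tiling}-to-geometry reductions, but must be carried out natively in $\mathbb{R}^d$ rather than lifting a planar construction. I would allocate one center per cell (so $k'$ equals the number of cells, which is $k^2$ in the $(k\times k)$-grid formulation, or $O(k)$ after the appropriate rebalancing that Marx--Sidiropoulos already perform — the point is only that $k'$ is polynomially bounded and its "$1-1/d$ root'' matches) and then carefully verify the two directions: a satisfying assignment yields a set of well-aligned centers covering everything within radius $<1$; conversely, any center set of radius $<1+\epsilon$ must be "snapped'' to a consistent assignment, because a center covering a cell-cluster plus its incident \connectors{} has essentially only one feasible location up to the gap tolerance.

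The main obstacle I anticipate is the geometry of the \connectors{} gadget in general dimension $d$: I need the points encoding "value $i$ on the shared coordinate from the left cell'' and "value $j$ from the right cell'' to be mutually coverable by a radius-$<1$ ball precisely when $i=j$, while any mismatch pushes the required covering radius past $1+\epsilon$ uniformly in $i,j$ and in $d$. Getting a single clean constant $\epsilon=\epsilon(d)>0$ (rather than one that degrades with $d$ or with the number of values) requires choosing the encoding of values as well-separated points on a sphere or along an axis, and bounding the covering radius of mismatched pairs by an explicit triangle-inequality / law-of-cosines computation; keeping the point coordinates of polynomial bit-length is a secondary but necessary bookkeeping concern. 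Once the gadget's separation constant is pinned down, the YES/NO analysis and the parameter/size accounting are routine, and the two items of \autoref{thm:dom-set-d-dimensions} drop out by composing with the Marx--Sidiropoulos hardness (and, for the $d=2$ special case, recovering Marx's~\cite{DBLP:conf/iwpec/Marx06} bound as a sanity check).
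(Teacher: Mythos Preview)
Your high-level plan matches the paper's approach: a gap reduction from the $d$-dimensional geometric \geqcsp of Marx--Sidiropoulos to discrete $k$-\kcenter in $\mathbb{R}^d$, with one center per CSP variable ($k=|\VV|$ exactly) and points $n=|\mathcal{I}|^{O(1)}$, so the parameter accounting is correct.

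Where your plan diverges is the insistence on an absolute gap $\epsilon=\epsilon(d)$ independent of the domain size. This is not what the paper does, and chasing it is the wrong battle. In the paper's construction the gap is $\epsilon=\Theta\bigl(1/((d-1)\delta^2)\bigr)$ where $[\delta]^d$ is the CSP domain; the key observation is only that $1/\epsilon$ is \emph{polynomial} in $|\mathcal{I}|$, so the factor $(1/\epsilon)^{o(k^{1-1/d})}$ is absorbed into $|\mathcal{I}|^{o(k^{1-1/d})}=n^{o(k^{1-1/d})}$ when deriving the contradiction with \autoref{thm:marx-sidiropoulos}. Packing $\delta^d$ values into a bounded region while keeping a constant separation is not obviously possible, and the theorem does not require it.

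Two further technical points where your sketch is off. First, the Marx--Sidiropoulos binary constraints are $\geq$-inequalities on a single coordinate, not equality; the paper's ``secondary'' gadget is simply a line of $\delta$ points between adjacent cells, and the monotone structure is what makes coverage work (a core point for value $\bx$ covers secondary point $\ell$ iff $\ell\le\bx[i]$ from one side, $\ell>\by[i]$ from the other). Second, your soundness ``snapping'' needs a concrete mechanism: the paper plants $2d$ ``border'' points around each cell at distance exactly $r(1+\epsilon)$ from the cell origin, arranged so that opposite border points are $2r(1+\epsilon)$ apart; this forces any solution of cost $<2r(1+\epsilon)$ to spend exactly one center in each cell, and that center must be a core (not border or secondary) point. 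Without such a device, a center set could cheat by placing multiple centers in one cell and none in another.
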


\autoref{thm:dom-set-d-dimensions} answers Question~$1$ by showing that the running time of the $(1+\epsilon)$-approximation algorithm of Agarwal and Procopiuc \cite{agarwal-procopiuc} is essentially tight, i.e., the dependence on $\epsilon$ cannot be improved even if we allow a larger dependence on both $k$ and $n$. \autoref{thm:dom-set-d-dimensions} answers Question~$2$ by showing that the running time of the exact algorithm of Agarwal and Procopiuc \cite{agarwal-procopiuc} is asymptotically tight, i.e., the exponent of $k^{1-1/d}$ cannot be asymptotically improved even if we allow a larger dependence on $k$.

\subsection{Discussion of the continuous $k$-\kcenter problem}
\label{subsec:continuous-k-center-discussion}
In the continuous version of the $k$-\kcenter problem, the centers are not required to be picked from the original set of input points. The $n^{O\left(d\cdot k^{1-1/d}\right)}$ algorithm of Agarwal and Procopiuc~\cite{agarwal-procopiuc} also works for this continuous version of the $k$-\kcenter problem in $\mathbb{R}^d$. Marx \cite{dm-esa-05} showed the W[1]-hardness of $k$-\kcenter in $(\mathbb{R}^2, \ell_{\infty})$ parameterized by $k$. Cabello et al.~\cite{dm-continuous} studied the complexity of this problem parameterized by the dimension, and showed the W[1]-hardness of $4$-\kcenter in $(\mathbb{R}^d, \ell_{\infty})$ parameterized by $d$. Additionally, they also obtained the W[1]-hardness of $2$-\kcenter in $(\mathbb{R}^d, \ell_{2})$ parameterized by $d$\textcolor{black}{;} this reduction also rules out existence of $n^{o(d)}$ algorithms for this problem under the Exponential Time Hypothesis (ETH). It is an interesting open question whether the $n^{O\left(d\cdot k^{1-1/d}\right)}$ algorithm of Agarwal and Procopiuc \cite{agarwal-procopiuc} is also asymptotically tight for the continuous version of the problem: one way to possibly prove this would be to extend the W[1]-hardness reduction of Marx \cite{dm-esa-05} for continuous $k$-\kcenter in $\mathbb{R}^2$ (parameterized by $k$) to higher dimensions using the framework of Marx and Sidiropoulos \cite{DBLP:conf/compgeom/MarxS14}. Our reduction in this paper does not extend to the continuous version.

\subsection{Notation}
The set $\{1,2,\ldots, n\}$ is denoted by $[n]$. All vectors considered in this paper have length $d$. If $\ba$ is a vector then for each $i\in [d]$ its $i$\textcolor{black}{-th}  
coordinate is denoted by $\ba[i]$. Addition and subtraction of vectors is denoted by $\oplus$ and $\ominus$ respectively. The $i$\textcolor{black}{-th} 
unit vector is denoted by $\bolde_i$ and has $\be_{i}[i]=1$ and $\be_{i}[j]=0$ for each $j\neq i$. The $d$-dimensional vector \textcolor{black}{whose every} 
coordinate \textcolor{black}{equals} 
$1$ is denoted by $\bone^d$. If $u$ is a point and $X$ is a set of points then $\dist(u, X) = \min_{x\in X} \dist(u,x)$. We will sometimes abuse notation slightly and use $x$ to denote both the name and location of the point $x$.

\section{Lower bounds for exact \& approximate $k$-\kcenter in $d$-dimensional Euclidean space}
\label{sec:general-d}

The goal of this section is to prove \autoref{thm:dom-set-d-dimensions} which is restated below:

\domsetd*

\subparagraph*{Roadmap to prove \autoref{thm:dom-set-d-dimensions}:} To prove \autoref{thm:dom-set-d-dimensions}, we design a gap reduction (described in \autoref{subsec:redn-general-d}) from a constraint satisfaction problem (CSP) to the $k$-\kcenter problem. The definition and statement of the lower bound for the CSP due to Marx and Sidiropoulos \cite{DBLP:conf/compgeom/MarxS14} is given in \autoref{subsec:marx-sidiropoulos}. The correctness of the reduction is shown in~\autoref{subsec:k-center-general-d-easy} and~\autoref{subsec:k-center-general-d-hard}. Finally, everything is tied together in \autoref{subsec:finishing-the-proof} which contains the proof of \autoref{thm:dom-set-d-dimensions}.

\subsection{Lower bound for $d$-dimensional geometric \geqcsp~\cite{DBLP:conf/compgeom/MarxS14}}
\label{subsec:marx-sidiropoulos}

This section introduces the $d$-dimensional geometric \geqcsp problem of Marx and Sidiropoulos \cite{DBLP:conf/compgeom/MarxS14}. First we start with some definitions before stating the formal lower bound (\autoref{thm:marx-sidiropoulos}) that will be used to prove \autoref{thm:dom-set-d-dimensions}. Constraint Satisfaction Problems (CSPs) are a general way to represent several important problems in theoretical computer science. In this paper, we will only need a subclass of CSPs called binary CSPs which we define below.
\begin{definition}
\normalfont
An instance of a binary constraint satisfaction problem (CSP) is a triple $\mathcal{I}=(\VV, \DD, \CC)$ where $\VV$ is a set of variables, $\DD$ is a domain of values and $\CC$ is a set of constraints. There are two types of constraints:
\begin{itemize}
  \item \underline{\emph{Unary constraints}}: For some $v\in \VV$ there is a unary constraint $\langle v, R_v \rangle$ where $R_v \subseteq \DD$.
  \item \underline{\emph{Binary constraints}}: For some $u,v\in \VV$, \textcolor{black}{$u \neq v$}, there is a binary constraint $\big\langle (u,v), R_{u,v} \big\rangle$ where $R_{u,v}\subseteq \DD \times \DD$.
\end{itemize}
\label{defn:binary-csp}
\end{definition}

Solving a given CSP instance $\mathcal{I}=(\VV,\DD,\CC)$ is to check whether there exists a satisfying assignment for it, i.e., a function $f:\VV\to \DD$ such that all the constraints are satisfied. For a binary CSP, a satisfying assignment $f$ has the property that for each unary constraint $\langle v, R_v \rangle$ we have $f(v)\in R_v$ and for each binary constraint $\big\langle (u,v), R_{u,v} \big\rangle$ we have $\left(f(u),f(v)\right)\in R_{u,v}$.

The constraint graph of a given CSP instance $\mathcal{I}=(V,D,C)$ is an undirected graph $G_{\mathcal{I}}$ whose vertex set is $V$ and the adjacency relation is defined as follows: two vertices $u,v\in V$ are adjacent in $G_{\mathcal{I}}$ if there is a constraint in $\mathcal{I}$ which contains both $u$ and $v$.
Marx and Sidiropoulos~\cite{DBLP:conf/compgeom/MarxS14} observed that
binary CSPs whose primal graph is a subgraph of the $d$-dimensional grid are useful in showing lower bounds for geometric problems in $d$-dimensions.

\begin{definition}
\normalfont
The $d$-dimensional grid $\R[N, d]$ is an undirected graph with vertex set $[N]^d$ and the adjacency relation is as follows: two vertices $(a_1, a_2, \ldots, a_d)$ and $(b_1, b_2, \ldots, b_d)$ have an edge between them if and only if $\sum_{i=1}^{d} |a_i-b_i|=1$.
\label{defn:d-dimensional-grid}
\end{definition}

\begin{definition}
\normalfont
A $d$-dimensional geometric \geqcsp $\mathcal{I}=(\VV,\DD,\CC)$ is a binary CSP whose
\begin{itemize}
  \item set of variables $\VV$ is a subset of $\R[N,d]$ for some $N\geq 1$,
  \item domain is $[\delta]^d$ for some integer $\delta\geq 1$,
  \item constraint graph $G_{\mathcal{I}}$ is an \emph{induced} subgraph of $\R[N,d]$,
  \item unary constraints are arbitrary, \textcolor{black}{and}
  \item binary constraints are of the following type: if $\bolda,\bolda' \in \VV$ such that $\bolda'=\bolda\oplus\bolde_i$ for some $i\in [d]$ then there is a binary constraint $\big\langle (\bolda,\bolda'), R_{\bolda,\bolda'} \big\rangle$ where
      $R_{\bolda,\bolda'}=\left\{ \left( \bx,\by\right)\in R_{\ba}\times R_{\ba'}\ \mid\ \bx[i]\geq \by[i] \right\}$.
\end{itemize}
\label{defn:d-dimensional-geometric-leqcsp}
\end{definition}

Observe that the set of unary constraints of a $d$-dimensional geometric \geqcsp is sufficient to completely define it. The size $|\mathcal{I}|$ of a binary CSP $\mathcal{I}=(\VV,\DD,\CC)$ is the combined size of the variables, domain and the constraints. With appropriate preprocessing (e.g., combining different constraints on the same variables) we can assume that $|\mathcal{I}|=\left(|\VV|+|\DD|\right)^{O(1)}$. We now state the result of Marx and Sidiropoulos \cite{DBLP:conf/compgeom/MarxS14} which gives a lower bound on the complexity of checking whether a given $d$-dimensional geometric \geqcsp has a satisfying assignment.

\begin{theorem}
\normalfont
\citep[Theorem 2.10]{DBLP:conf/compgeom/MarxS14} If for some fixed $d \geq 2$, there is an $f(|\VV|)\cdot |\mathcal{I}|^{o\left(|\VV|^{1-1/d}\right)}$ time algorithm for solving a $d$-dimensional geometric \geqcsp $\mathcal{I}$ for some computable function $f$, then the Exponential Time Hypothesis (ETH) fails.
\label{thm:marx-sidiropoulos}
\end{theorem}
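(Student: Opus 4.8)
\textbf{Proof proposal for \autoref{thm:marx-sidiropoulos}.}

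The plan is to prove this ETH lower bound by a chain of reductions that transfers the known $2^{o(n)}$-hardness of $3$-SAT first into a hardness result for (colored) \textsc{Subgraph Isomorphism} / \textsc{Grid-Tiling}-type problems, and then onto the $d$-dimensional grid. The crucial intermediate object is the following: there is no $f(k)\cdot n^{o(k)}$-time algorithm deciding whether a graph on $n$ vertices contains a clique of size $k$ (equivalently the standard \textsc{Partitioned Subgraph Isomorphism} lower bound), a consequence of ETH via the sparsification lemma and the classical reduction from $3$-SAT to \textsc{Clique}. From such a ``flat'' $k$-variable CSP (think of it as a binary CSP whose constraint graph is the complete graph $K_k$, each variable having domain size $\le n^{O(1)}$), I would build an equivalent $d$-dimensional geometric $\geq$-CSP. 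The first main step is therefore to \emph{embed $K_k$ into a $d$-dimensional grid}: since a $t\times t\times\cdots\times t$ ($d$ factors) grid has $\Theta(t^d)$ vertices and, more importantly, a balanced separator of size $\Theta(t^{d-1})$, we can host a $K_k$-structured computation if we let each original variable be represented by a ``gadget'' occupying a sub-box and route the $\binom{k}{2}$ pairwise constraints through shared coordinate ``wires''; the right parameter relation is $k \approx t^{d-1}$, i.e. $|\VV| \approx t^d$, so that $k \approx |\VV|^{1-1/d}$ exactly. This is the step that turns an $n^{o(k)}$ algorithm into an $|\mathcal{I}|^{o(|\VV|^{1-1/d})}$ algorithm.

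Concretely, I would use a product/tensor construction. Group the $k$ variables of the flat CSP into $d$ ``bundles'' each of size about $k^{1/d}$ — wait, that is the wrong split; instead the standard trick (as in Marx's earlier planar lower bounds and their higher-dimensional extension) is to put the flat instance's ``assignment information'' along the boundary faces of the grid and propagate it inward using the monotone $\bx[i]\ge\by[i]$ binary constraints, which act as ``directional consistency/propagation'' gadgets along the $i$-th axis. The domain $[\delta]^d$ of the geometric CSP is where the encoded assignment of a whole block of original variables lives; the product structure $[\delta]^d$ is exactly matched to the $d$ axes so that each coordinate $\bx[i]$ carries the portion of information that needs to flow in direction $\bolde_i$. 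I would define unary constraints $R_{\ba}$ on grid vertex $\ba$ to encode ``the partial assignment consistent with what has propagated to $\ba$ so far'', and verify that the forced monotone binary constraints between $\ba$ and $\ba\oplus\bolde_i$ are exactly strong enough to enforce global consistency while weak enough to be expressible in the required restricted $\geq$-form (recall from \autoref{defn:d-dimensional-geometric-leqcsp} that the binary constraints are \emph{not} free — they must be precisely $R_{\ba,\ba'}=\{(\bx,\by)\in R_{\ba}\times R_{\ba'}\mid \bx[i]\ge \by[i]\}$). Making everything fit this monotone template, rather than using arbitrary binary constraints, is the technically delicate part of the encoding.

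The reduction must satisfy three properties, which I would establish in order: (i) \textbf{size bound} — $|\mathcal{I}| = (|\VV|+|\DD|)^{O(1)} = (n\cdot k)^{O(1)}$, polynomial in the size of the source instance, and the grid side length $N$ and domain parameter $\delta$ are polynomially bounded; (ii) \textbf{parameter bound} — $|\VV| = \Theta(k^{d/(d-1)})$, equivalently $k = \Theta(|\VV|^{1-1/d})$, so that a hypothetical $f(|\VV|)\cdot|\mathcal{I}|^{o(|\VV|^{1-1/d})}$ algorithm becomes an $f'(k)\cdot n^{o(k)}$ algorithm for the source problem; (iii) \textbf{equivalence} — the geometric $\geq$-CSP is satisfiable iff the source $k$-clique / flat CSP instance is. Combining (i)–(iii): an algorithm beating the claimed bound for geometric $\geq$-CSP, composed with the reduction, would decide $k$-\textsc{Clique} in $f'(k)\cdot n^{o(k)}$ time, contradicting ETH (via the chain $3$-SAT $\to$ sparsification $\to$ \textsc{Clique}).

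The step I expect to be the main obstacle is (iii) together with the restricted-constraint issue in the second paragraph: one must design the unary constraints $R_{\ba}$ so that propagation along the axes using \emph{only} the forced monotone binary relations correctly simulates the all-pairs constraints of $K_k$, without ``information leakage'' (a satisfying assignment of the grid CSP that does not correspond to any valid clique) and without over-constraining (a valid clique whose encoding fails to satisfy some monotonicity requirement). The monotonicity direction — $\bx[i]\ge\by[i]$ when going from $\ba$ to $\ba\oplus\bolde_i$ — means information naturally ``decreases'' as we move in the positive $\bolde_i$ direction, so I would orient the encoding so that boundary faces inject the full assignment and interior vertices carry monotone projections of it; checking that the $\Theta(t^{d-1})$-sized separators of the grid are genuinely the bottleneck (and that no smaller gadget suffices) is what pins down the exponent $|\VV|^{1-1/d}$ and is the heart of the argument. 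Once the gadget is correct, properties (i) and (ii) are routine counting.
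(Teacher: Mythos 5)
First, a framing point: the paper does not prove \autoref{thm:marx-sidiropoulos} at all — it is imported verbatim from Marx and Sidiropoulos (modulo the $\le$-to-$\ge$ flip handled in Remark~\ref{remark:issue-of-geqcsp-versus-leqcsp}), so there is no in-paper proof to compare against. Your high-level plan — start from the ETH-based $f(k)\cdot n^{o(k)}$ lower bound for $k$-\textsc{Clique}/\textsc{Partitioned Subgraph Isomorphism}, transfer it to a CSP whose constraint graph lives in a $d$-dimensional grid with $|\VV|=\Theta(k^{d/(d-1)})$ variables, and read off the exponent $k=\Theta(|\VV|^{1-1/d})$ — is indeed the skeleton of the known argument, and your parameter arithmetic in step (ii) is correct.

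As a proof, however, the proposal has a genuine gap: everything that would make it work is deferred. Two concrete problems. First, the claim that because the side-$t$ grid has balanced separators of size $\Theta(t^{d-1})$ ``we can host a $K_k$-structured computation'' with $k\approx t^{d-1}$ does not hold up as stated: large treewidth is only a necessary condition for hosting $K_k$, not a sufficient one, and for $d=2$ it is simply false that $K_k$ embeds as a minor into the planar grid (planar graphs exclude $K_5$ minors). The reductions that actually work do not embed $K_k$ topologically; they use a Grid-Tiling-style encoding in which each grid variable carries a tuple of source-variable values, its unary constraint checks one source constraint, and propagation along grid lines enforces that all cells in a line agree on a shared value (in higher dimensions this is combined with an embedding theorem for the source constraint graph into blown-up grids). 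Second, the single hardest step — forcing that propagation using \emph{only} the mandated relations $R_{\ba,\ba'}=\{(\bx,\by)\,:\,\bx[i]\ge\by[i]\}$, i.e., turning ``$\ge$ along an axis'' into equality of the encoded values without leakage — is exactly the step you yourself flag as ``the main obstacle'' and leave unconstructed; until that gadget is exhibited and its soundness and completeness proven, property (iii) is unestablished and the argument is a plan rather than a proof. A smaller conceptual slip: you do not need to check that ``no smaller gadget suffices'' to pin down the exponent — a reduction-based lower bound needs one correct gadget with the stated parameters, and optimality of the gadget plays no role.
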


\begin{remark}
\label{remark:issue-of-geqcsp-versus-leqcsp}
\normalfont
The problem defined by Marx and Sidiropoulos \cite{DBLP:conf/compgeom/MarxS14} is actually $d$-dimensional geometric \leqcsp which has $\leq$-constraints instead of the $\geq$-constraints. However, for each $\bolda\in \VV$ by replacing each unary constraint $\bx\in R_{\bolda}$ by $\by$ such that $\by[i]=N+1-\bx[i]$ for each $i\in [d]$, it is easy to see that $d$-dimensional geometric \leqcsp and $d$-dimensional geometric \geqcsp are equivalent.
\end{remark}

\subsection{Reduction from $d$-dimensional geometric \geqcsp to $k$-\kcenter in $\mathbb{R}^d$}
\label{subsec:redn-general-d}

We are now ready to describe our reduction from $d$-dimensional geometric \geqcsp to $k$-\kcenter in $\mathbb{R}^d$.
Fix any $d\geq 2$.
Let $\mathcal{I}=(\VV,\DD,\CC)$ be a $d$-dimensional geometric \geqcsp instance on variables $\VV$ and domain $[\delta]^d$ for some integer $\delta\geq 1$. We fix\footnote{For simplicity of presentation, we choose $r=1/4$ instead of $r=1$: by scaling the result holds also for $r=1$.} the following two quantities:
\begin{gather}\label{defn:r-epsilon-general-d}
r:=\frac{1}{4}\quad \text{and}\quad \textcolor{black}{\epsilon:=\frac{r^2}{(d-1)\delta^{2}} = \frac{1}{16(d-1)\delta^2}}.
\end{gather}
\textcolor{black}{Since $d\geq 2$ and $\delta\geq 1$, we obtain the following bounds from} \autoref{defn:r-epsilon-general-d}, 
\begin{align}
  0 < \epsilon\leq \epsilon\delta\leq \epsilon\delta^2\leq \epsilon\delta^2(d-1)= r^2 = \frac{1}{16}.  \label{eqn:always-to-be-cited}
\end{align}

Given an instance $\mathcal{I}=(\VV, \DD, \CC)$ of $d$-dimensional geometric \geqcsp, we add a set $\mathcal{U}$ of points in $\mathbb{R}^d$ as described in \autoref{table:general-d:construction} \textcolor{black}{and} \autoref{table-special-sets-general-d}. These set of points are the input for the instance of the $|\VV|$-\kcenter problem.
\begin{table}[ht]
\noindent\framebox{\begin{minipage}{\textwidth}
\begin{enumerate}
  \item[(1)] \underline{Corresponding to variables}: If $\bolda\in \VV$ then we add the following set of points which are collectively called as $\border[\bolda]$
    \begin{itemize}
    \item For each $i\in [d]$, the point $B_{\bolda}^{+i}$ which is located at $\bolda \oplus \bolde_{i}\cdot \textcolor{black}{r(1-\epsilon)}\oplus (\bone^d-\be_i)\cdot 2\epsilon \delta$.
    \item For each $i\in [d]$, the point $B_{\bolda}^{-i}$ which is located at $\bolda \ominus \bolde_{i}\cdot \textcolor{black}{r(1-\epsilon)} \ominus (\bone^d-\be_i)\cdot 2\epsilon \delta$.
    \end{itemize}
    This set of points are \textcolor{black}{referred to} as \emph{border} points.
  \item[(2)] \underline{Corresponding to unary constraints}: If $\bolda\in \VV$ and $\big\langle (\bolda), R_{\bolda} \big\rangle$ is the unary constraint on $\bolda$, then we add the following set of points which are collectively called as $\core[\bolda]$:
    \begin{itemize}
    \item  for each $\boldx\in R_{\bolda}\subseteq [\delta]^d$ we add a point called $C_{\bolda}^{\boldx}$ located at $\bolda \oplus \epsilon\cdot \boldx$.
    \end{itemize}
    This set of points are \textcolor{black}{referred to} as \emph{core} points.
  \item[(3)] \underline{Corresponding to adjacencies in $G_{\mathcal{I}}$}: For every edge $(\bolda, \boldaprime)$ in $G_{\mathcal{I}}$ we add a collection of $\delta$ points denoted by $\mathcal{S}_{\{\bolda,\bolda'\}}$. Assume, without loss of generality, that $\boldaprime = \bolda\oplus\bolde_i$ for some $i\in[d]$.

    Then the set of points $\mathcal{S}_{\{\bolda,\bolda'\}}$ is defined as follows:
    \begin{itemize}
    \item for each $\ell\in [\delta]$ we add a point $S_{\{\bolda,\bolda'\}}^{\ell}$ which is located at $\bolda\oplus\bolde_{i}\cdot\left( (1-\epsilon)2r+\epsilon\ell\right)$.
    \end{itemize}
    This set of points are \textcolor{black}{referred to} as \emph{secondary} points.
\end{enumerate}
\end{minipage}}
\vspace{2mm}
\caption{The set $\mathcal{U}$ of points in $\mathbb{R}^d$ \big(which gives an instance of $k$-\kcenter\big) constructed from an instance $\mathcal{I}=(\VV,\DD,\CC)$ of $d$-dimensional geometric \geqcsp.}
\label{table:general-d:construction}
\end{table}

\textcolor{black}{Note that we add at most $|\VV|\cdot 2d$ many border points, at most $|\CC|$ many core points, and at most $|\VV|^{2}\cdot \delta$ many secondary points}. Hence, the total number of points $n$ in the instance $\mathcal{U}$ is $\leq |\VV|\cdot 2d + |\CC| +|\VV|^{2}\cdot \delta = |\mathcal{I}|^{O(1)}$ where $|\mathcal{I}|=|\VV|+|\DD|+|\CC|$. We now prove some preliminary lemmas to be later used in~\autoref{subsec:k-center-general-d-easy} and~\autoref{subsec:k-center-general-d-hard}.

\begin{table}[ht]
\noindent\framebox{\begin{minipage}{\textwidth}
\begin{gather}
\text{For each}\ \bolda\in \VV,\ \text{let}\ \mathcal{D}[\bolda] := \core[\bolda]\ \bigcup\ \border[\bolda]. \label{eqn:defn-of-d[a]-general-d} \\
\text{The set of primary points is \textsc{Primary}}:= \bigcup_{\bolda\in \VV} \mathcal{D}[\bolda]. \label{eqn:defn-of-primary-balls-general-d} \\
\text{The set of secondary points is \textsc{Secondary}} := \bigcup_{\bolda\ \&\ \boldaprime\ \text{forms an edge in}\ G_\mathcal{I}} \mathcal{S}_{\{\bolda,\bolda'\}}. \label{eqn:defn-of-secondary-balls-general-d}\\
\text{The final collection of points is}\ \mathcal{U} := \textsc{Primary}\ \bigcup\ \textsc{Secondary}. \label{eqn:defn-of-set-of-ball-general-d}
\end{gather}
\end{minipage}}
\vspace{2mm}
\caption{Notation for some special subsets of points from $\mathcal{U}$. Note that a primary point is either a core point or a border point.}
\label{table-special-sets-general-d}
\end{table}

\subsubsection{Preliminary lemmas}

\begin{lemma}
\normalfont
For each $\bolda\in \VV$ and $i\in [d]$, we have $\dist\left( B_{\ba}^{+i}, B_{\ba}^{-i}\right)\geq 2r(1+\epsilon)$.
\label{lem:borders-dont-intersect-2d}
\end{lemma}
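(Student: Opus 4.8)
The plan is to prove \autoref{lem:borders-dont-intersect-2d} by a direct coordinate-wise computation of the Euclidean distance between the two border points. First I would read off from \autoref{table:general-d:construction} that $B_{\ba}^{+i}$ is obtained from $\ba$ by displacing it by $+r(1-\epsilon)$ along coordinate $i$ and by $+2\epsilon\delta$ along every other coordinate, while $B_{\ba}^{-i}$ is displaced from $\ba$ by the negatives of exactly these amounts. Hence the displacement vector $B_{\ba}^{+i}\ominus B_{\ba}^{-i}$ has its $i$-th coordinate equal to $2r(1-\epsilon)$ and each of its remaining $d-1$ coordinates equal to $4\epsilon\delta$ (the factor is $4$, not $2$, since the two points are pushed in opposite directions along those axes).

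Next I would square and sum to get $\dist(B_{\ba}^{+i},B_{\ba}^{-i})^2 = 4r^2(1-\epsilon)^2 + (d-1)\cdot 16\epsilon^2\delta^2$, and compare this with the target bound $\bigl(2r(1+\epsilon)\bigr)^2 = 4r^2(1+\epsilon)^2$. Subtracting, the desired inequality $\dist(B_{\ba}^{+i},B_{\ba}^{-i})^2 \geq 4r^2(1+\epsilon)^2$ is equivalent to $16(d-1)\epsilon^2\delta^2 \geq 4r^2\bigl((1+\epsilon)^2-(1-\epsilon)^2\bigr) = 16r^2\epsilon$, i.e.\ to $(d-1)\epsilon\delta^2 \geq r^2$.

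Finally, by the choice $\epsilon = \tfrac{r^2}{(d-1)\delta^2}$ fixed in \eqref{defn:r-epsilon-general-d}, we have $(d-1)\epsilon\delta^2 = r^2$ precisely, so the required inequality holds (in fact with equality), which completes the proof. I do not anticipate any genuine obstacle: the only points needing care are the sign bookkeeping along the non-$i$ coordinates (both border points move, in opposite directions, so they differ by $4\epsilon\delta$ rather than $2\epsilon\delta$) and the observation that $\epsilon$ was defined exactly so that this extra off-axis separation compensates for the shortfall of $2r(1-\epsilon)$ versus the desired $2r(1+\epsilon)$ along coordinate $i$.
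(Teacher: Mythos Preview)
Your proposal is correct and follows essentially the same approach as the paper: a direct coordinate-wise computation of $\dist(B_{\ba}^{+i},B_{\ba}^{-i})^2 = (2r(1-\epsilon))^2 + (d-1)(4\epsilon\delta)^2$, followed by substituting the defining relation $(d-1)\epsilon\delta^2 = r^2$ to obtain exactly $(2r(1+\epsilon))^2$. The paper simplifies forward to the closed form while you compare against the target and reduce to the same identity, but these are cosmetic differences in presentation only.
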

\begin{proof}
Fix any $\bolda\in \VV$ and $i\in [d]$. By~\autoref{table:general-d:construction}, the points $B_{\bolda}^{+i}$ and $B_{\bolda}^{-i}$ are located at $\bolda \oplus \bolde_{i}\cdot r(1-\epsilon) \oplus (\bone^d-\be_i)\cdot 2\epsilon \delta$ and $\bolda \ominus \bolde_{i}\cdot r(1-\epsilon) \ominus (\bone^d-\be_i)\cdot 2\epsilon \delta$ respectively. Hence, we have that
\begin{align*}
  \dist\left(B_{\ba}^{+i}, B_{\ba}^{-i}\right)^2
  & = (2r(1-\epsilon))^{2} + (d-1)\cdot (4\epsilon \delta)^{2} = (2r(1-\epsilon))^{2} + 16\epsilon \cdot (d-1)\epsilon\delta^{2}, \\
  & = (2r(1-\epsilon))^{2} + 16\epsilon \cdot r^2, \tag{by definition of $\epsilon$ in \autoref{defn:r-epsilon-general-d}}\\
  & = (2r)^2[(1-\epsilon)^2 + 4\epsilon] = (2r(1+\epsilon))^2.
\end{align*}
\end{proof}

\begin{lemma}
\normalfont
For each $\bolda\in \VV$, the distance between any two points in $\core[\ba]$ is $< r$.
\label{lem:core-pairwise-intersects-general-d}
\end{lemma}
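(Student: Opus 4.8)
The plan is to compute the pairwise distance between two core points directly and show that it is dominated by the tiny scaling factor $\epsilon$ used in the construction.

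First I would recall from item~(2) of \autoref{table:general-d:construction} that every point of $\core[\ba]$ has the form $C_{\ba}^{\bx} = \ba \oplus \epsilon\cdot\bx$ for some $\bx\in R_{\ba}\subseteq[\delta]^d$. Hence, given two distinct points $C_{\ba}^{\bx}$ and $C_{\ba}^{\by}$ of $\core[\ba]$, the common translation by $\ba$ cancels coordinate-wise, so the $i$-th coordinates of the two points differ by exactly $\epsilon\,(\bx[i]-\by[i])$ and therefore
\[
\dist\!\left(C_{\ba}^{\bx}, C_{\ba}^{\by}\right)^2 \;=\; \epsilon^2 \sum_{i=1}^{d}\bigl(\bx[i]-\by[i]\bigr)^2 .
\]

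Next I would use that the domain is $[\delta]^d$: each of $\bx[i]$ and $\by[i]$ lies in $\{1,\dots,\delta\}$, so $|\bx[i]-\by[i]|\le \delta-1<\delta$, which gives $\sum_{i=1}^d(\bx[i]-\by[i])^2 < d\delta^2$. Combining this with the identity $\epsilon\delta^2(d-1)=r^2$ from \autoref{defn:r-epsilon-general-d}, the bound $\epsilon\le r^2=\tfrac1{16}$ from \eqref{eqn:always-to-be-cited}, and $\tfrac{d}{d-1}\le 2$ for $d\ge2$, I obtain
\[
\dist\!\left(C_{\ba}^{\bx}, C_{\ba}^{\by}\right)^2 \;<\; \epsilon^2 d\delta^2 \;=\; \frac{d}{d-1}\cdot\epsilon\cdot r^2 \;\le\; 2\epsilon\, r^2 \;\le\; \frac{r^2}{8} \;<\; r^2,
\]
and taking square roots finishes the proof.

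There is no genuine obstacle here — the claim reduces to a one-line estimate — so the only things to be careful about are (i) noting that translating both core points by $\ba$ leaves their distance unchanged, so only the $\epsilon\bx$ parts matter, and (ii) invoking the correct bound on the entries of a domain vector, namely that each coordinate of a vector in $[\delta]^d$ lies strictly below $\delta$; this is precisely the place where the choice of $\epsilon$ in \autoref{defn:r-epsilon-general-d} is calibrated so that the core of each variable gadget fits inside a ball of radius $r$.
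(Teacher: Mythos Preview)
Your proof is correct and follows essentially the same route as the paper: compute the squared distance as $\epsilon^2\sum_i(\bx[i]-\by[i])^2$, bound each summand using $|\bx[i]-\by[i]|\le\delta-1$, and then invoke the relation $\epsilon\delta^2(d-1)=r^2$ together with $d/(d-1)\le 2$ and $\epsilon\le 1/16$ to conclude. One small wording slip in your closing remark: coordinates of vectors in $[\delta]^d$ are at most $\delta$, not strictly below it; what you actually use (correctly) is that the \emph{difference} of two such coordinates is at most $\delta-1<\delta$.
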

\begin{proof}
Fix any $\ba \in \VV$. Consider any two points in $\core[\ba]$, say $C_{\ba}^{\bx}$ and $C_{\ba}^{\by}$, for some $\bx\neq \by$. By \autoref{table:general-d:construction}, these points are located at $\ba\oplus \epsilon\cdot \bx$ and $\ba\oplus \epsilon\cdot \by$ respectively. Hence, we have
\begin{align*}
  \dist\left( C_{\ba}^{\bx}, C_{\ba}^{\by}\right)^2  & = \left(\epsilon\cdot \dist(\bx,\by)\right)^{2}, \\
  & \leq \epsilon^2\cdot d\cdot (\delta-1)^{2}, \tag{since $\bx,\by \in R_{\ba} \subseteq [\delta]^d$}\\
  & = \frac{d(\delta-1)^2}{(d-1)^2\delta^4}\cdot r^4, \tag{by definition of $\epsilon$ in \autoref{defn:r-epsilon-general-d}}\\
  & \leq \frac{1}{8}\cdot r^4 < r. \tag{since $d\geq 2$ and $\delta \geq 1$}
\end{align*}
\end{proof}

\begin{lemma}
\normalfont
For each $\ba\in \VV$, the distance of any point from $\core[\ba]$ to any point from $\border[\ba]$ is $<2r$.
\label{lem:core-intersects-all-border-general-d}
\end{lemma}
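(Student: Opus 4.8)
The plan is to fix $\ba\in\VV$, a core point $C_{\ba}^{\bx}$ (so $\bx\in R_{\ba}\subseteq[\delta]^{d}$), and a border point, and to bound their Euclidean distance coordinate by coordinate, in the same spirit as the proofs of \autoref{lem:borders-dont-intersect-2d} and \autoref{lem:core-pairwise-intersects-general-d}. By \autoref{table:general-d:construction}, the border point is either $B_{\ba}^{+i}$, located at $\ba\oplus\be_{i}\cdot r(1-\epsilon)\oplus(\bone^{d}-\be_{i})\cdot 2\epsilon\delta$, or $B_{\ba}^{-i}$, located at $\ba\ominus\be_{i}\cdot r(1-\epsilon)\ominus(\bone^{d}-\be_{i})\cdot 2\epsilon\delta$, for some $i\in[d]$, while $C_{\ba}^{\bx}$ is located at $\ba\oplus\epsilon\cdot\bx$. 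Subtracting these locations, the displacement vector from $C_{\ba}^{\bx}$ to $B_{\ba}^{\pm i}$ has $i$-th coordinate $\epsilon\bx[i]\mp r(1-\epsilon)$ and, for every $j\neq i$, $j$-th coordinate $\epsilon\bx[j]\mp 2\epsilon\delta$.

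The next step is to bound each coordinate of this displacement in absolute value, in a way that covers both signs at once. Since $1\le\bx[i]\le\delta$, the triangle inequality gives that the $i$-th coordinate is at most $r(1-\epsilon)+\epsilon\delta\le r+r^{2}$ in absolute value (using $\epsilon\delta\le r^{2}$ from \eqref{eqn:always-to-be-cited}), and every coordinate $j\neq i$ is at most $2\epsilon\delta+\epsilon\delta=3\epsilon\delta$ in absolute value. Summing the squares and using the identity $(d-1)\epsilon\delta^{2}=r^{2}$ (again from \eqref{eqn:always-to-be-cited}),
\[
\dist\!\left(C_{\ba}^{\bx},\,B_{\ba}^{\pm i}\right)^{2}\;\le\;(r+r^{2})^{2}+(d-1)(3\epsilon\delta)^{2}\;=\;(r+r^{2})^{2}+9\epsilon\cdot\big((d-1)\epsilon\delta^{2}\big)\;=\;(r+r^{2})^{2}+9\epsilon r^{2}.
\]
Since $\epsilon\le r^{2}=\tfrac{1}{16}$ by \eqref{eqn:always-to-be-cited} and $r=\tfrac14$, the right-hand side is at most $(r+r^{2})^{2}+9r^{4}=\tfrac{25}{256}+\tfrac{9}{256}=\tfrac{34}{256}<\tfrac14=(2r)^{2}$, so $\dist(C_{\ba}^{\bx},B_{\ba}^{\pm i})<2r$, which is what we need.

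I do not expect a real obstacle in this argument; the only thing to be careful about is that, unlike the core points (whose displacement $\epsilon\bx$ from $\ba$ has all positive coordinates), the border point $B_{\ba}^{-i}$ sits on the \emph{opposite} side of $\ba$, so the $+i$ and $-i$ cases are not literally symmetric — the $-i$ case is the larger one, which is precisely why the bounds $r+r^{2}$ and $3\epsilon\delta$ are written crudely enough to absorb both signs. As an alternative one can bypass coordinates and go through $\ba$ by the triangle inequality, using $\dist(\ba,C_{\ba}^{\bx})=\epsilon\|\bx\|\le\epsilon\sqrt{d}\,\delta\le\sqrt{2}\,r^{2}$ (since $\sqrt{d}/(d-1)\le\sqrt{2}$ for $d\ge2$ and $\delta\ge1$) and $\dist(\ba,B_{\ba}^{\pm i})=r(1+\epsilon)$, after which $\dist(C_{\ba}^{\bx},B_{\ba}^{\pm i})\le\sqrt{2}\,r^{2}+r(1+\epsilon)<2r$ follows from $\sqrt{2}\,r<\tfrac12$ and $\epsilon\le r^{2}<\tfrac12$.
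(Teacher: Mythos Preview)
Your main argument is correct and is essentially the paper's proof: both compute the squared distance coordinate by coordinate and invoke the identity $(d-1)\epsilon\delta^{2}=r^{2}$; the only difference is cosmetic --- the paper treats $B_{\ba}^{+i}$ and $B_{\ba}^{-i}$ in two separate cases (getting the sharper bound $r(1+\epsilon)$ in the $+i$ case), whereas you absorb both signs into a single cruder bound $(r+r^{2})^{2}+9\epsilon r^{2}$. Your alternative via the triangle inequality through $\ba$, using $\dist(\ba,B_{\ba}^{\pm i})=r(1+\epsilon)$ and $\dist(\ba,C_{\ba}^{\bx})\le\sqrt{2}\,r^{2}$, is a genuinely shorter route that the paper does not take; it trades away any quantitative sharpness but gets the qualitative ``$<2r$'' conclusion in one line.
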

\begin{proof}
  Fix any $\bolda\in \VV$ and consider any point $C_{\bolda}^{\boldx}\in \core[\bolda]$ where $\boldx\in R_{\bolda}\subseteq [\delta]^{d}$.
We prove this lemma by showing that, for each $i\in [d]$, the point $C_{\bolda}^{\boldx}$ is \textcolor{black}{at} distance $< 2r$ from \textcolor{black}{both} the points $B_{\bolda}^{+i}$ and $B_{\bolda}^{-i}$. Fix some $i\in [d]$.
\begin{enumerate}
\item[(i)] By \autoref{table:general-d:construction}, the points $C_{\bolda}^{\boldx}$ and $B_{\bolda}^{+i}$ are located at $\bolda\oplus \epsilon\cdot \boldx$ and  $\bolda \oplus \bolde_{i}\cdot r(1-\epsilon) \oplus (\bone^d-\be_i)\cdot 2\epsilon \delta$ respectively. Hence, we have
  \begin{align*}
    \dist\left(C_{\ba}^{\bx} , B_{\bolda}^{+i}\right)^2 &
    = (r(1-\epsilon)-\epsilon\cdot \bx[i])^{2} + \sum_{j=1 \colon j\neq i}^{d} (2\epsilon \delta- \epsilon\cdot \bx[j] )^2, \\
    & \leq (r(1-\epsilon))^2 + (d-1)(2\epsilon\delta)^2, \tag{since $\bx[i],\bx[j] \geq 1$} \\
    & = (r(1-\epsilon))^2 + 4\epsilon r^2, \tag{by definition of $\epsilon$ in \autoref{defn:r-epsilon-general-d}}\\
    & = (r(1+\epsilon))^2 < (2r)^2. \tag{since $\epsilon < 1$}
  \end{align*}

\item[(ii)] By~\autoref{table:general-d:construction}, the points $C_{\bolda}^{\boldx}$ and $B_{\bolda}^{-i}$ are located at $\bolda\oplus \epsilon\cdot \boldx$ and $\bolda \ominus \bolde_{i}\cdot r(1-\epsilon) \ominus (\bone^d-\be_i)\cdot 2\epsilon \delta$ respectively. Hence, we have
  \begin{align*}
    \dist\left(C_{\ba}^{\bx} , B_{\bolda}^{-i}\right)^2 & = (r(1-\epsilon)+\epsilon\cdot \bx[i])^{2} + \sum_{j=1 \colon j\neq i}^{d} (\epsilon\cdot \bx[j] +2\epsilon\delta)^2, \\
    & \leq (r(1-\epsilon) + \epsilon\delta)^2 + (d-1)(3\epsilon\delta)^2, \tag{since $\bx[i],\bx[j] \leq \delta$}\\
    & = (r(1-\epsilon) + \epsilon\delta)^2 + 9\epsilon r^2, \tag{by definition of $\epsilon$} \\
    & \leq 2r^2(1-\epsilon)^2 + 2\epsilon^2\delta^2 + 9\epsilon r^2, \tag{since $(\alpha+\beta)^2 \leq 2\alpha^2 + 2\beta^2$}\\
    & \leq 2r^2(1-\epsilon)^2 +  11\epsilon r^2, \tag{since $\epsilon\delta^2 \leq r^2$}\\
    & = 2r^2((1-\epsilon)^2 +  5.5\epsilon) < 2r^2(1+1.75\epsilon)^2 < (2r)^2. \tag{since $\epsilon \leq 1/16$}
\end{align*}
\end{enumerate}
\end{proof}

\begin{claim}
\normalfont
For each $\bolda\in \VV$, the distance of $\ba$ to any point \textcolor{black}{in} $\border[\ba]$ is $r(1+\epsilon)$.
\label{lem:D[a]-are-close-to-a}
\end{claim}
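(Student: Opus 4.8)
The plan is to directly compute the Euclidean distance from $\ba$ to each border point using the explicit coordinates given in \autoref{table:general-d:construction}, exploiting the defining relation $\epsilon = \frac{r^2}{(d-1)\delta^2}$ from \autoref{defn:r-epsilon-general-d}. This is essentially the ``one-sided'' version of the computation already carried out in \autoref{lem:borders-dont-intersect-2d}, where the full distance $\dist(B_{\ba}^{+i},B_{\ba}^{-i})$ was shown to equal $2r(1+\epsilon)$; here $\ba$ is the midpoint of the segment joining $B_{\ba}^{+i}$ and $B_{\ba}^{-i}$, so one expects each of the two distances to be exactly half of that.

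Concretely, I would fix $\ba\in\VV$ and $i\in[d]$ and consider the border point $B_{\ba}^{+i}$, which by \autoref{table:general-d:construction} is located at $\ba\oplus\bolde_i\cdot r(1-\epsilon)\oplus(\bone^d-\be_i)\cdot 2\epsilon\delta$. Subtracting $\ba$, the displacement vector has $i$-th coordinate $r(1-\epsilon)$ and each of the other $d-1$ coordinates equal to $2\epsilon\delta$, so
\[
\dist(\ba,B_{\ba}^{+i})^2 = \big(r(1-\epsilon)\big)^2 + (d-1)(2\epsilon\delta)^2 = r^2(1-\epsilon)^2 + 4\epsilon\cdot(d-1)\epsilon\delta^2 = r^2(1-\epsilon)^2 + 4\epsilon r^2,
\]
using $(d-1)\epsilon\delta^2 = r^2$. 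Since $(1-\epsilon)^2 + 4\epsilon = 1 + 2\epsilon + \epsilon^2 = (1+\epsilon)^2$, this gives $\dist(\ba,B_{\ba}^{+i}) = r(1+\epsilon)$. The point $B_{\ba}^{-i}$ sits at $\ba$ minus the same displacement vector (with every coordinate negated), so its squared distance to $\ba$ is identical, and hence $\dist(\ba,B_{\ba}^{-i}) = r(1+\epsilon)$ as well. Since every point of $\border[\ba]$ is of the form $B_{\ba}^{+i}$ or $B_{\ba}^{-i}$ for some $i\in[d]$, this proves the claim.

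There is no real obstacle here: the only thing to be careful about is substituting the definition of $\epsilon$ correctly so that the cross-term $(d-1)(2\epsilon\delta)^2$ collapses to $4\epsilon r^2$, after which the algebraic identity $(1-\epsilon)^2 + 4\epsilon = (1+\epsilon)^2$ finishes it.
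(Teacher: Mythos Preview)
Your proposal is correct and follows essentially the same approach as the paper: both compute $\dist(\ba,p)^2 = (r(1-\epsilon))^2 + (d-1)(2\epsilon\delta)^2$, use the relation $(d-1)\epsilon\delta^2 = r^2$ to rewrite this as $r^2(1-\epsilon)^2 + 4\epsilon r^2 = (r(1+\epsilon))^2$, and note that the computation is identical for $B_{\ba}^{+i}$ and $B_{\ba}^{-i}$.
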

\begin{proof}
  Let $p$ be any point in $\border[\ba]$. Then we have two choices for $p$, \textcolor{black}{namely $p=B_{\ba}^{+i}$ or $p=B_{\ba}^{-i}$. In both cases, we have}
  \begin{align*}
    \dist(p,\ba)^2  = (r(1-\epsilon))^2 + (d-1)(2\epsilon\delta)^2 = r^2(1-\epsilon)^2 + 4\epsilon r^2 = (r(1+\epsilon))^2,
  \end{align*}
  \textcolor{black}{where the second equality is obtained by the definition of $\epsilon$} (\autoref{defn:r-epsilon-general-d}).
\end{proof}

\begin{lemma} 
\normalfont
For each $\bolda\in \VV$ and each $i\in [d]$,
\begin{itemize}
  \item If $w\in \mathcal{U}$ such that $\dist\left( w, B_{\bolda}^{+i}\right)<2r(1+\epsilon)$ then $w\in \left( \mathcal{D}[\bolda]\ \bigcup\ \mathcal{S}_{\{\bolda,\bolda\oplus\bolde_{i} \}}\right)$.
  \item If $w\in \mathcal{U}$ such that $\dist\left( w, B_{\bolda}^{-i}\right)<2r(1+\epsilon)$ then $w\in \left( \mathcal{D}[\bolda]\ \bigcup\ \mathcal{S}_{\{\bolda,\bolda\ominus\bolde_{i} \}}\right)$.
\end{itemize}
\label{lem:border-doesnt-intersect-other-connectors-general-d}
\end{lemma}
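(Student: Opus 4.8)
The plan is to fix $\bolda$ and $i$, take an arbitrary $w\in\mathcal{U}$ with $\dist(w,B_{\bolda}^{+i})<2r(1+\epsilon)$, and prove $w\in\mathcal{D}[\bolda]\cup\mathcal{S}_{\{\bolda,\bolda\oplus\bolde_i\}}$; the second bullet is then the special case obtained by the relabelling $\bolde_i\mapsto-\bolde_i$. I would first record two cheap facts used throughout. (i) Every point of $\mathcal{D}[\bolda']$ lies within distance $r(1+\epsilon)$ of $\bolda'$: for border points this is \autoref{lem:D[a]-are-close-to-a}, and for a core point $C_{\bolda'}^{\bx}$ it is $\dist(C_{\bolda'}^{\bx},\bolda')=\epsilon\lVert\bx\rVert_{2}\le\epsilon\delta\sqrt{d}\le r$, using \autoref{defn:r-epsilon-general-d}. (ii) $\dist(B_{\bolda}^{+i},\bolda)=r(1+\epsilon)$, again by \autoref{lem:D[a]-are-close-to-a}. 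Also note $4r(1+\epsilon)=1+\epsilon<\sqrt{2}$ by \eqref{eqn:always-to-be-cited}, so any two distinct grid vertices that are \emph{not} adjacent (hence at $\ell_2$-distance $\ge\sqrt{2}$) are far enough apart that (i), (ii) and the triangle inequality alone settle them. The argument then splits on which variable or edge $w$ belongs to.

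\emph{Primary $w$, say $w\in\mathcal{D}[\bolda']$.} If $\bolda'=\bolda$ we are done. If $\dist(\bolda,\bolda')\ge\sqrt{2}$ then $\dist(w,B_{\bolda}^{+i})\ge\sqrt{2}-2r(1+\epsilon)\ge 2r(1+\epsilon)$ by (i),(ii), contradicting the hypothesis. If $\bolda'$ is a grid-neighbour of $\bolda$ other than $\bolda\oplus\bolde_i$ (i.e.\ $\bolda\ominus\bolde_i$, or $\bolda\pm\bolde_j$ with $j\ne i$), I would compute $B_{\bolda}^{+i}\ominus\bolda'$ coordinate-wise and note it has a coordinate of absolute value $\ge 1-2\epsilon\delta\ge\tfrac78$ (for $\bolda\ominus\bolde_i$ even $\ge 1+r(1-\epsilon)>\tfrac74$), so $\dist(\bolda',B_{\bolda}^{+i})\ge\tfrac78>3r(1+\epsilon)$ and hence $\dist(w,B_{\bolda}^{+i})\ge 3r(1+\epsilon)-r(1+\epsilon)=2r(1+\epsilon)$ by (i); again a contradiction. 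Finally if $\bolda'=\bolda\oplus\bolde_i$: for every border point of $\bolda\oplus\bolde_i$ except $B_{\bolda\oplus\bolde_i}^{-i}$, and for every core point $C_{\bolda\oplus\bolde_i}^{\bx}$, the coordinate-$i$ displacement from $B_{\bolda}^{+i}$ has absolute value $\ge 1-r(1-\epsilon)=\tfrac34>2r(1+\epsilon)$, while for $B_{\bolda\oplus\bolde_i}^{-i}$ the same algebra as in \autoref{lem:D[a]-are-close-to-a} (using $(d-1)(2\epsilon\delta)^2=4\epsilon r^2$) gives $\dist(B_{\bolda}^{+i},B_{\bolda\oplus\bolde_i}^{-i})^2=\bigl(2r(1-\epsilon)\bigr)^2+16\epsilon r^2=\bigl(2r(1+\epsilon)\bigr)^2$; thus no point of $\mathcal{D}[\bolda\oplus\bolde_i]$ is at distance $<2r(1+\epsilon)$ from $B_{\bolda}^{+i}$, another contradiction. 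So for primary $w$ the only surviving case is $w\in\mathcal{D}[\bolda]$.

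\emph{Secondary $w=S_{\{\bolda',\bolda''\}}^{\ell}$ with $\bolda''=\bolda'\oplus\bolde_m$,} so $w=\bolda'\oplus\bolde_m\cdot t$ where $t=(1-\epsilon)2r+\epsilon\ell\in(\tfrac12,\tfrac12+r^2]$. If $\{\bolda',\bolda''\}=\{\bolda,\bolda\oplus\bolde_i\}$ (which forces $\bolda'=\bolda$, $m=i$) we are done. Otherwise: since two distinct grid-neighbours of $\bolda$ are never adjacent, the only edges with both endpoints in the closed grid-neighbourhood of $\bolda$ are those incident to $\bolda$; so if the current edge is not incident to $\bolda$, one endpoint is at grid-distance $\ge\sqrt{2}$ from $\bolda$, and since $\dist(w,\bolda'')=1-t<\tfrac12$ and $\dist(w,\bolda')=t\le\tfrac12+r^2$, the triangle inequality with (i),(ii) again gives $\dist(w,B_{\bolda}^{+i})>2r(1+\epsilon)$. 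For the edges incident to $\bolda$ other than $\{\bolda,\bolda\oplus\bolde_i\}$: for $\{\bolda\ominus\bolde_i,\bolda\}$ the coordinate-$i$ displacement of $w$ from $B_{\bolda}^{+i}$ already has absolute value $(1-t)+r(1-\epsilon)\ge\tfrac{7}{16}+\tfrac{15}{64}>2r(1+\epsilon)$; for $\{\bolda,\bolda\oplus\bolde_j\}$ and $\{\bolda\ominus\bolde_j,\bolda\}$ with $j\ne i$, expanding $\dist(w,B_{\bolda}^{+i})^2$ coordinate-wise and using $(d-1)(2\epsilon\delta)^2=4\epsilon r^2$ collapses it to $r^2(1+\epsilon)^2+t(t-4\epsilon\delta)$ and $r^2(1+\epsilon)^2+(1-t)\bigl((1-t)+4\epsilon\delta\bigr)$ respectively; in each case the correction term is monotone in $\ell$, and the requirement that it be $\ge 3r^2(1+\epsilon)^2$ reduces after simplification to an inequality in $d,\delta,\epsilon$ that holds for the parameters of \autoref{defn:r-epsilon-general-d} — for the first, to $16(d-1)\delta^2-32\delta+1\ge0$, true whenever $\delta\ge2$ (we may assume $\delta\ge2$, since a one-element domain makes the source \geqcsp trivially decidable), and for the second to $1-14\epsilon+\epsilon^2\ge 0$, true since $\epsilon\le\tfrac1{16}$. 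Combining all cases, every $w$ with $\dist(w,B_{\bolda}^{+i})<2r(1+\epsilon)$ lies in $\mathcal{D}[\bolda]\cup\mathcal{S}_{\{\bolda,\bolda\oplus\bolde_i\}}$.

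The routine part is the bookkeeping over the $O(d)$ grid-neighbours of $\bolda$ and the $O(d)$ edges incident to $\bolda$; for everything "far" a triangle inequality with $\Theta(1)$ slack suffices. The main obstacle is that two families of points — the single border point $B_{\bolda\oplus\bolde_i}^{-i}$ and the secondary points on the perpendicular incident edges $\{\bolda,\bolda\oplus\bolde_j\}$ with $j\ne i$ — sit essentially \emph{on} the sphere of radius $2r(1+\epsilon)$ about $B_{\bolda}^{+i}$, so one cannot use any order-of-magnitude estimate but must exploit the exact value $\epsilon=r^2/((d-1)\delta^2)$ through the identity $(d-1)(2\epsilon\delta)^2=4\epsilon r^2$, and the strictness of the inequality $\dist(w,B_{\bolda}^{+i})<2r(1+\epsilon)$ in the hypothesis is used crucially.
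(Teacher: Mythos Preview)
Your approach is the same as the paper's: an exhaustive case analysis over all point types in $\mathcal{U}$, showing each is at distance $\ge 2r(1+\epsilon)$ from $B_{\bolda}^{+i}$. Your organisation is somewhat cleaner than the paper's --- using fact (i) and the triangle inequality to dispatch everything attached to a variable at $\ell_2$-distance $\ge\sqrt2$ from $\bolda$ in one stroke, rather than the paper's coordinate-by-coordinate bookkeeping --- but the substance is identical.

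Two minor arithmetic slips (neither fatal): for $B_{\bolda\oplus\bolde_i}^{-i}$, the $i$-coordinate displacement from $B_{\bolda}^{+i}$ is $1-2r(1-\epsilon)=2r(1+\epsilon)$, not $2r(1-\epsilon)$, so the squared distance is $(2r(1+\epsilon))^2+16\epsilon r^2>(2r(1+\epsilon))^2$ rather than equal to it; and for $B_{\bolda\oplus\bolde_i}^{-j}$ with $j\ne i$, the $i$-displacement is $1-2\epsilon\delta-r(1-\epsilon)$, not $1-r(1-\epsilon)$, though this is still $\ge 5/8>2r(1+\epsilon)$. Your observation that the perpendicular-edge case $\{\bolda,\bolda\oplus\bolde_j\}$ needs $\delta\ge 2$ (else the point $S^{1}_{\{\bolda,\bolda\oplus\bolde_j\}}$ genuinely lies strictly inside the $2r(1+\epsilon)$-ball for $d=2,\delta=1$) is correct and is a subtlety the paper's own appendix proof glosses over; your justification that $\delta=1$ renders the source CSP trivial is the right fix.
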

\begin{proof}
The proof of this lemma is quite long, and is hence deferred to~\autoref{app:long-lemma-proof} to maintain the flow of the paper.
\end{proof}

\begin{remark}
\normalfont
\autoref{lem:border-doesnt-intersect-other-connectors-general-d} gives a necessary but not sufficient condition. Also, it might be the case that for some $\bolda\in \VV$ and $i\in [d]$ the vector $\bolda\oplus\bolde_{i}\notin \VV \left(\mbox{resp., } \bolda\ominus \bolde_{i}\notin \VV\right)$ in which case the set $\mathcal{S}_{\{\bolda,\bolda\oplus\bolde_{i} \}} \left(\mbox{resp., } \mathcal{S}_{\{\bolda,\bolda\ominus \bolde_{i} \}}\right)$ is empty.
\end{remark}

\begin{lemma}
\normalfont
Let $\bolda\in \VV$ and $i\in [d]$ be such that $\bolda':=(\bolda\oplus\bolde_i) \in \VV$. For each $\ell\in [\delta]$,
\begin{itemize}
  \item[ (1)] If $\boldx\in R_{\bolda}$ and $\ell\leq \bx[i]$, then $\dist\left(C_{\ba}^{\bx} , S_{\{\bolda,\bolda'\}}^{\ell}\right)< 2r$.

  \item[ (2)] If $\boldx\in R_{\bolda}$ and $\ell> \bx[i]$, then $\dist\left(C_{\ba}^{\bx} , S_{\{\bolda,\bolda'\}}^{\ell}\right)\geq 2r(1+\epsilon)$.

  \item[ (3)] If $\boldy\in R_{\ba'}$ and $\ell> \by[i]$, then $\dist\left( C_{\ba'}^{\by} , S_{\{\bolda,\bolda'\}}^{\ell}\right)< 2r$.

  \item[ (4)] If $\boldy\in R_{\ba'}$ and $\ell\leq \by[i]$, then $\dist\left( C_{\ba'}^{\by} , S_{\{\bolda,\bolda'\}}^{\ell}\right)\geq 2r(1+\epsilon)$.
\end{itemize}
\label{lem:how-H-intersects-with-geq-general-d}
\end{lemma}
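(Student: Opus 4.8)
The plan is to read off the exact coordinates of $C_{\ba}^{\bx}$, $C_{\ba'}^{\by}$ and $S_{\{\bolda,\bolda'\}}^{\ell}$ from \autoref{table:general-d:construction} and estimate the relevant Euclidean distances coordinate by coordinate, always treating coordinate $i$ (the ``active'' direction of the secondary points) separately from the other $d-1$ coordinates. Since $\bolda'=\bolda\oplus\bolde_i$, we have $C_{\ba}^{\bx}=\bolda\oplus\epsilon\bx$, $\ C_{\ba'}^{\by}=\bolda\oplus\bolde_i\oplus\epsilon\by$, and $\ S_{\{\bolda,\bolda'\}}^{\ell}=\bolda\oplus\bolde_i\cdot\big((1-\epsilon)2r+\epsilon\ell\big)$; thus in coordinates $j\neq i$ the secondary point agrees with $\bolda$, while in coordinate $i$ it is offset by the ``base'' amount $(1-\epsilon)2r$ plus the $\epsilon$-perturbation $\epsilon\ell$. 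Throughout I will use: the identity $(d-1)(\epsilon\delta)^2=\epsilon r^2$ (immediate from the definition of $\epsilon$ in \autoref{defn:r-epsilon-general-d}), the bound $\epsilon\delta\le r^2$ from inequality~\eqref{eqn:always-to-be-cited}, the value $r=1/4$ (so that $4r=1$), and the integrality of $\ell,\bx[i],\by[i]$ together with $\bx[i],\by[i]\in[\delta]$.

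For parts (1) and (3) — the ``intersection'' cases — the point is that in coordinate $i$ the $\epsilon$-perturbation points in the favourable direction, so the $i$-th coordinate difference has absolute value at most $(1-\epsilon)2r$. In part (1) this difference is $\big|(1-\epsilon)2r-\epsilon(\bx[i]-\ell)\big|$ with $0\le\bx[i]-\ell\le\delta-1$, hence at most $(1-\epsilon)2r$ since $\epsilon(\bx[i]-\ell)\le\epsilon\delta\le r^2\le 2r(1-\epsilon)$; in part (3) it is $\big|1-(1-\epsilon)2r+\epsilon(\by[i]-\ell)\big|$ with $-( \delta-1)\le\by[i]-\ell\le-1$, which (using $r=1/4$, so $1-(1-\epsilon)2r=\tfrac12+\tfrac\epsilon2$, and $\epsilon\delta\le r^2=\tfrac1{16}$) lies in $\big(0,(1-\epsilon)2r\big]$. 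Each of the other $d-1$ coordinate differences is at most $\epsilon\delta$. Summing the squares,
\[
\dist^2 \;\le\; \big((1-\epsilon)2r\big)^2 + (d-1)(\epsilon\delta)^2 \;=\; (2r)^2(1-\epsilon)^2 + \epsilon r^2 \;=\; r^2\big(4(1-\epsilon)^2+\epsilon\big),
\]
and $4(1-\epsilon)^2+\epsilon = 4-7\epsilon+4\epsilon^2<4$ because $\epsilon(7-4\epsilon)>0$ for $0<\epsilon\le\tfrac1{16}$ by~\eqref{eqn:always-to-be-cited}; hence $\dist<2r$, proving (1) and (3).

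For parts (2) and (4) — the ``non-intersection'' cases — it suffices to lower-bound the Euclidean distance by the $i$-th coordinate difference alone, since there the $\epsilon$-perturbation points in the unfavourable direction and, being an integer multiple of $\epsilon$ of absolute value at least $1$, contributes at least $\epsilon$. In part (2) the $i$-th coordinate difference equals $(1-\epsilon)2r+\epsilon(\ell-\bx[i])\ge(1-\epsilon)2r+\epsilon$, and $(1-\epsilon)2r+\epsilon\ge 2r(1+\epsilon)$ is equivalent to $1\ge 4r$, which holds (with equality). In part (4) it equals $1-(1-\epsilon)2r+\epsilon(\by[i]-\ell)\ge 1-(1-\epsilon)2r = 2r(1+\epsilon)$, again using $r=1/4$. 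Thus $\dist\ge 2r(1+\epsilon)$ in both cases.

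There is no genuine obstacle here: once the three families of points are written out in coordinates the statement is pure bookkeeping. The only mild subtleties are (i) keeping track of the sign of the $\epsilon$-perturbation in coordinate $i$ — favourable in (1),(3), unfavourable in (2),(4) — and (ii) invoking integrality of $\ell,\bx[i],\by[i]$ so that a strict inequality between them becomes a gap of at least $1$. The constants $r=1/4$ and $\epsilon=r^2/((d-1)\delta^2)$ are calibrated exactly so that the threshold $2r(1+\epsilon)$ comes out with the claimed strict/non-strict inequalities (tight, in fact, in the lower-bound direction of (2) and (4)).
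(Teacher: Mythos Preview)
Your proof is correct and follows essentially the same approach as the paper's: write out the coordinates of the three points, separate the $i$-th coordinate from the remaining $d-1$, and bound the squared distance using $(d-1)(\epsilon\delta)^2=\epsilon r^2$ together with $4r=1$. The only cosmetic difference is that you treat (1),(3) and (2),(4) in pairs, whereas the paper handles the four cases one by one; the underlying estimates are identical.
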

\begin{proof}
Recall from~\autoref{table:general-d:construction} that the points $C_{\ba}^{\bx}$ and $S_{\{\bolda,\bolda'\}}^{\ell}$ are located at $\ba\oplus \epsilon\cdot \bx$ and $\bolda\oplus\bolde_{i}\cdot((1-\epsilon)2r+\epsilon\ell)$ respectively.
\begin{itemize}
  \item[(1)] If $\ell\leq \bx[i]$, then $\dist\left( C_{\ba}^{\bx} , S_{\{\bolda,\bolda'\}}^{\ell} \right)^2$
    \begin{align*}
      &=(2r(1-\epsilon) + \epsilon(\ell-\bx[i]))^2 + \sum_{j=1 \colon j\neq i}^{d} (\epsilon\cdot \bx[j])^2, \\
      & \leq (2r(1-\epsilon))^2 + (d-1)\epsilon^2\delta^2 = (2r(1-\epsilon))^2 + \epsilon r^2 \tag{since $\ell \leq \bx[i]$ and $\bx[j] \leq \delta$} \\
      & = (2r)^2\left((1-\epsilon)^2 + \frac{\epsilon}{4}\right) < (2r)^2. \tag{since $0< \epsilon < 1$}
    \end{align*}

\item[(2)] If $\ell> \bx[i]$, then $\dist\left( C_{\ba}^{\bx} , S_{\{\bolda,\bolda'\}}^{\ell} \right)^2$
  \begin{align*}
    &=(2r(1-\epsilon) + \epsilon(\ell-\bx[i]))^2 + \sum_{j=1 \colon j\neq i}^{d} (\epsilon\cdot \bx[j])^2, \\
    & \geq (2r(1-\epsilon) + \epsilon)^2 = (2r(1-\epsilon) + 4r\epsilon)^2 = (2r(1+\epsilon))^2. \tag{since $\ell > \bx[i]$ and $4r = 1$}
      \end{align*}
\end{itemize}
We now show the remaining two claims: recall from~\autoref{table:general-d:construction} that the points $C_{\ba'}^{\by}$ and $S_{\{\bolda,\bolda'\}}^{\ell}$ are located at $(\ba'\oplus \epsilon\cdot \by) = \ba\oplus \be_i \oplus \epsilon\cdot \by$ and $\bolda\oplus\bolde_{i}\cdot((1-\epsilon)2r + \epsilon\ell)$ respectively.
\begin{itemize}
  \item[(3)] If $\ell> \by[i]$, then $\dist\left( C_{\ba'}^{\by} , S_{\{\bolda,\bolda'\}}^{\ell} \right)^2$
    \begin{align*}
      & = (1+\epsilon\cdot\by[i] - (1-\epsilon)2r -\epsilon\ell)^2 + \sum_{j=1 \colon j\neq i}^{d} (\epsilon\cdot \by[j])^2, \\
      & \leq (4r+\epsilon\cdot\by[i] - (1-\epsilon)2r -\epsilon\ell)^2 + (d-1)\epsilon^2\delta^2, \tag{since $4r=1$ and $\by[j]\leq \delta$}\\
      & = (2r(1+\epsilon) - \epsilon (\ell-\by[i]))^2 + \epsilon r^2, \tag{since $(d-1)\epsilon\delta^2 = r^2$}\\
      & \leq (2r(1+\epsilon) - \epsilon)^2 + \epsilon r^2, \tag{since $\ell > \by[i]$}\\
      & = (2r(1-\epsilon))^2 + \epsilon r^2, \tag{since $4r =1$}\\
      & = (2r)^2\left((1-\epsilon)^2 + \frac{\epsilon}{4}\right) < (2r)^2. \tag{since $0< \epsilon < 1$}
      \end{align*}

  \item[(4)] If $\ell\leq \by[i]$, then $\dist\left( C_{\ba'}^{\by} , S_{\{\bolda,\bolda'\}}^{\ell} \right)^2$
    \begin{align*}
      & = (1+\epsilon\cdot\by[i] - (1-\epsilon)2r -\epsilon\ell)^2 + \sum_{j=1 \colon j\neq i}^{d} (\epsilon\cdot \by[j])^2, \\
      & \geq  (2r(1+\epsilon) + \epsilon (\by[i]-\ell))^2, \tag{since $4r = 1$}\\
      & \geq (2r(1+\epsilon))^2. \tag{since $\by[i] \geq \ell$}
      \end{align*}
\end{itemize}
\end{proof}

\begin{lemma}
\normalfont
Let $\bolda\in \VV$ and $i\in [d]$ be such that $\bolda':=(\bolda\oplus\bolde_i) \in \VV$. If $\ba''\notin \{\ba, \ba'\}$ then the distance between any point \textcolor{black}{in} $\core[\ba'']$ \textcolor{black}{and} any point in $\mathcal{S}_{\ba,\ba'}$ is at least $2r(1+\epsilon)$.
\label{lem:connector-far-away-from-other-cores}
\end{lemma}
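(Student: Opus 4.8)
The plan is a direct Euclidean-distance computation followed by a short two-way case split on how the grid point $\ba''$ sits relative to $\ba$. By \autoref{table:general-d:construction}, a generic point of $\core[\ba'']$ is $C_{\ba''}^{\boldz}$, located at $\ba''\oplus\epsilon\cdot\boldz$ for some $\boldz\in R_{\ba''}\subseteq[\delta]^d$, while a generic point of $\mathcal{S}_{\{\bolda,\bolda'\}}$ is $S_{\{\bolda,\bolda'\}}^{\ell}$, located at $\ba\oplus\be_i\cdot s$ with $s:=(1-\epsilon)\cdot 2r+\epsilon\ell$ and $\ell\in[\delta]$. Since $\ba$ and $\ba''$ are vertices of $\R[N,d]$, the numbers $t_j:=\ba''[j]-\ba[j]$ are integers for every $j\in[d]$, and
\begin{equation*}
\dist\big(C_{\ba''}^{\boldz},\,S_{\{\bolda,\bolda'\}}^{\ell}\big)^2=\sum_{j\neq i}\big(t_j+\epsilon\cdot\boldz[j]\big)^2+\big(t_i+\epsilon\cdot\boldz[i]-s\big)^2 .
\end{equation*}
The only quantitative inputs I will need are $4r=1$, the chain $\epsilon\le\epsilon\delta\le r^2=\tfrac1{16}$ from \autoref{eqn:always-to-be-cited}, and $\epsilon\cdot\boldz[j]\in[\epsilon,\epsilon\delta]$ for each $j$ (because $1\le\boldz[j]\le\delta$); together these give $\tfrac12<\tfrac{1+\epsilon}{2}\le s\le\tfrac{1-\epsilon}{2}+\epsilon\delta\le\tfrac9{16}$, whereas the quantity we must lower-bound the distance by is $2r(1+\epsilon)=\tfrac{1+\epsilon}{2}\le\tfrac{17}{32}$.

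In the first case some coordinate $j\neq i$ has $t_j\neq 0$, so $|t_j|\ge1$ and hence $|t_j+\epsilon\cdot\boldz[j]|\ge 1-\epsilon\delta\ge\tfrac{15}{16}$; this one summand alone makes the squared distance at least $(\tfrac{15}{16})^2$, which is larger than $(\tfrac{17}{32})^2\ge\big(2r(1+\epsilon)\big)^2$. In the complementary case $t_j=0$ for all $j\neq i$, so $\ba''$ agrees with $\ba$ outside coordinate $i$ and therefore $\ba''=\ba\oplus t_i\be_i$ for some integer $t_i$; since $\ba''\notin\{\ba,\ba\oplus\be_i\}=\{\ba,\ba'\}$ we must have $t_i\le-1$ or $t_i\ge2$. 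Discarding the remaining nonnegative summands, the squared distance is at least $\big(t_i+\epsilon\cdot\boldz[i]-s\big)^2$, and $t_i\ge2$ gives $t_i+\epsilon\cdot\boldz[i]-s\ge2-\tfrac9{16}=\tfrac{23}{16}$ while $t_i\le-1$ gives $t_i+\epsilon\cdot\boldz[i]-s\le-1+\tfrac1{16}-\tfrac12=-\tfrac{23}{16}$, so the squared distance is at least $(\tfrac{23}{16})^2>\big(2r(1+\epsilon)\big)^2$. Taking square roots in every case yields $\dist\big(C_{\ba''}^{\boldz},S_{\{\bolda,\bolda'\}}^{\ell}\big)\ge 2r(1+\epsilon)$, which proves the lemma.

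I do not anticipate a genuine obstacle: the statement is a clean geometric separation, and the only part demanding care is the bookkeeping of constants. The key structural fact behind it is that the secondary point $S_{\{\bolda,\bolda'\}}^{\ell}$ lies at $i$-th coordinate $\ba[i]+s$ with $\tfrac12<s<1$, so it sits just past the midpoint of the grid edge $\{\ba,\ba'\}$; consequently a core point anchored at any variable other than $\ba$ or $\ba'$ is forced to differ from $S_{\{\bolda,\bolda'\}}^{\ell}$ by at least a fixed constant in some coordinate, comfortably exceeding the threshold radius $2r(1+\epsilon)$, while the per-coordinate perturbations $\epsilon\cdot\boldz[j]\le\epsilon\delta\le\tfrac1{16}$ are far too small to close that gap.
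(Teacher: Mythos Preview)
Your proof is correct and follows essentially the same approach as the paper: both set up the distance formula from the construction in \autoref{table:general-d:construction} and split into the case where $\ba''$ differs from $\ba$ in some coordinate $j\neq i$ versus the case where they agree off coordinate $i$ (forcing $t_i\le -1$ or $t_i\ge 2$), then bound one coordinate's contribution alone. The only cosmetic difference is that you collapse the paper's three cases into two and work with explicit numerical bounds like $\tfrac{15}{16}$ and $\tfrac{23}{16}$ rather than the symbolic inequalities the paper uses.
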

\begin{proof}
Let $\bp$ and $\bq$ be two arbitrary points from $\core[\ba'']$ and $\mathcal{S}_{\ba,\ba'}$, respectively. By \autoref{table:general-d:construction}, $\bp$ is located at $\ba''\oplus \epsilon\cdot \bx$ for some $\bx\in R_{\ba}\subseteq [\delta]^d$ and $\bq$ is located at $\ba\oplus \be_i \cdot ((1-\epsilon)2r + \epsilon\ell)$ for some $\ell\in [\delta]$.

Since $\ba'=\ba\oplus \be_i$ and $\ba''\notin \{\ba,\ba'\}$, we have three cases to consider:
\begin{itemize}
\item \underline{$\ba''[j]= \ba[j]$ for all $j\neq i$ and $\ba''[i]\leq \ba[i]-1$}: In this case, we have $\dist(\bp, \bq)^2$
  \begin{align*}
    & \geq \left(\left(\ba[i]+ (1-\epsilon)2r+\epsilon\ell\right) - \left(\ba''[i]+\epsilon\cdot \bx[i]\right) \right)^2, \tag{only considering the $i$-th coordinate}\\
    & = \left(\ba[i]-\ba''[i] + (1-\epsilon)2r+\epsilon \ell -\epsilon \bx[i] \right)^2,\\
    & \geq \left(1 + (1-\epsilon)2r + \epsilon\cdot 4r -\epsilon\delta \right)^2, \tag{since $\ba[i]-\ba''[i] \geq 1$, $\ell\geq 1 = 4r$ and $\bx[i]\leq \delta$}\\
    & > (2r(1+\epsilon))^2. \tag{since $1-\epsilon\delta \geq 1- \frac{1}{16}>0$}
  \end{align*}

\item \underline{$\ba''[j]= \ba[j]$ for all $j\neq i$ and $\ba''[i]\geq \ba[i]+2$}: In this case, we have $\dist(\bp, \bq)^2$
  \begin{align*}
    &\geq \left(\left(\ba''[i] +\epsilon\cdot \bx[i] \right) - \left(\ba[i]+(1-\epsilon)2r+\epsilon\ell \right) \right)^2, \tag{only considering the $i$-th coordinate}\\
    & = \left(\ba''[i] - \ba[i] - (1-\epsilon)2r +\epsilon\cdot \bx[i] - \epsilon\ell \right)^2, \\
    & \geq (2 - (1-\epsilon)2r +\epsilon -\epsilon\delta)^2, \tag{since $\ba''[i]- \ba[i] \geq 2$, $\bx[i] \geq 1$ and $\ell \leq \delta$}\\
    & = (4r - (1-\epsilon)2r + 1 + \epsilon - \epsilon\delta)^2, \tag{since $4r = 1$}\\
    & > (2r(1+\epsilon))^2. \tag{since $1-\epsilon\delta \geq 1- \frac{1}{16}>0$}
  \end{align*}

\item \underline{There exists $j\neq i$ such that $\ba''[j]\neq \ba[j]$}: In this case, we have $\dist(\bp, \bq)$
  \begin{align*}
    &\geq \left|\ba[j]-\left(\ba''[j]+\epsilon\cdot \bx[j]\right)\right|, \tag{only considering the $j$-th coordinate}\\
    &\geq \left|\ba[j]-\ba''[j]\right| - \epsilon\cdot \bx[j], \tag{by triangle inequality}\\
    &\geq 1-\epsilon\cdot \delta, \tag{since $\ba[j]\neq \ba''[j]$ and $\bx[j]\leq\delta$} \\
    & \geq 2r + 2r - r^2 = 2r + 2r \left(1-\frac{r}{2}\right), \tag{since $4r = 1$ and $\epsilon\delta \leq r^2$}\\
    & > 2r(1+\epsilon). \tag{since $1-\frac{r}{2}>\frac{1}{16}\geq \epsilon$}
  \end{align*}
\end{itemize}
\end{proof}

\subsection{{\normalsize
$\mathcal{I}$ has a satisfying assignment $\Rightarrow$ \OPT for the instance $\mathcal{U}$ of $|\VV|$-\kcenter is $< 2r$
}}
\label{subsec:k-center-general-d-easy}

Suppose that the $d$-dimensional geometric \geqcsp $\mathcal{I}=(\VV,\DD,\CC)$ has a satisfying assignment $f:\VV\to \DD$. Consider the set of points $F$ given by $\Big \{ C_{\bolda}^{f(\bolda)} : \bolda\in \VV\Big\}$. Since $f:\VV\to \DD$ is a satisfying assignment for $\mathcal{I}$, it follows that $f(\bolda)\in R_{\bolda}$ for each $\bolda \in \VV$ and hence the set $F$ is well-defined. Clearly, $|F|=|\VV|$. We now show that
$$\OPT(F):=\Big(\max_{u\in \mathcal{U}} \big( \min_{v\in F} \dist(u,v) \big) \Big)< 2r$$
This implies that \OPT for the instance $\mathcal{U}$ of $|\VV|$-\kcenter is $< 2r$. We show $\OPT(F)<2r$ by showing that $\dist(p, F)<2r$ for each $p\in \mathcal{U}$. From~\autoref{table:general-d:construction} and~\autoref{table-special-sets-general-d}, it is sufficient to consider the two cases depending on whether $p$ is a primary point or a secondary point.

\begin{lemma}
\normalfont
If $p$ is a primary point, then $\dist(p, F)<2r$.
\label{lem:general-d-udg-dominates-D}
\end{lemma}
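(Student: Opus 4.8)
The statement says that if $p$ is a primary point — i.e. either a core point $C_{\ba}^{\bx}$ for some $\ba \in \VV$, $\bx \in R_{\ba}$, or a border point $B_{\ba}^{\pm i}$ for some $\ba \in \VV$, $i \in [d]$ — then $\dist(p, F) < 2r$. Recall $F = \{ C_{\ba}^{f(\ba)} : \ba \in \VV \}$ consists of exactly one core point per variable, chosen according to the satisfying assignment $f$. The natural strategy is: for a primary point belonging to $\mc{D}[\ba]$ (which by \autoref{eqn:defn-of-d[a]-general-d} equals $\core[\ba] \cup \border[\ba]$), exhibit the single witness $C_{\ba}^{f(\ba)} \in F$ and bound the distance to it, using the preliminary lemmas that already handle distances within $\mc{D}[\ba]$.

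First I would split into the two cases. If $p = C_{\ba}^{\bx} \in \core[\ba]$, then the witness is $C_{\ba}^{f(\ba)}$, which is also in $\core[\ba]$ (since $f(\ba) \in R_{\ba}$). By \autoref{lem:core-pairwise-intersects-general-d}, the distance between any two points of $\core[\ba]$ is $< r < 2r$ (and if $\bx = f(\ba)$ the distance is $0$), so we are done immediately. If $p = B_{\ba}^{+i}$ or $p = B_{\ba}^{-i}$ is a border point in $\border[\ba]$, then again take the witness $C_{\ba}^{f(\ba)} \in \core[\ba] \subseteq F$; by \autoref{lem:core-intersects-all-border-general-d}, the distance from any core point of $\ba$ to any border point of $\ba$ is $< 2r$. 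That closes this case as well.

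Since every primary point lies in $\mc{D}[\ba] = \core[\ba] \cup \border[\ba]$ for a (unique) $\ba \in \VV$ by \autoref{eqn:defn-of-d[a]-general-d} and \autoref{eqn:defn-of-primary-balls-general-d}, these two cases are exhaustive, and in both the chosen witness $C_{\ba}^{f(\ba)}$ lies in $F$. The only subtlety worth flagging — and the reason the satisfying assignment is invoked here at all — is the well-definedness point: we need $f(\ba) \in R_{\ba}$ so that $C_{\ba}^{f(\ba)}$ actually exists as a point in the instance; this is exactly the property of a satisfying assignment noted just before the lemma. There is essentially no obstacle: the work has been front-loaded into \autoref{lem:core-pairwise-intersects-general-d} and \autoref{lem:core-intersects-all-border-general-d}, and this lemma is just the bookkeeping that assembles them. (The binary constraints of $\mc{I}$ play no role for primary points — they will only be needed for the secondary points, handled in a separate lemma.)
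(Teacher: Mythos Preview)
Your proposal is correct and follows essentially the same approach as the paper: split into the core and border cases, and in each case use the witness $C_{\ba}^{f(\ba)}\in F$ together with \autoref{lem:core-pairwise-intersects-general-d} (for core points) and \autoref{lem:core-intersects-all-border-general-d} (for border points) to bound the distance by $<2r$. The paper's proof is identical in structure and cites the same two preliminary lemmas.
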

\begin{proof}
If $p$ is a primary point, then by~\autoref{table:general-d:construction} and~\autoref{table-special-sets-general-d} it follows that $p$ is either a core point or a border point:
\begin{itemize}
  \item \textbf{$p$ is a core point}: By~\autoref{table:general-d:construction}, $p\in \core[\bb]$ for some $\bb\in \VV$. Then, \autoref{lem:core-pairwise-intersects-general-d} implies that $\dist\big(p, C_{\bb}^{f(\bb)} \big) <r$. Since $C_{\bb}^{f(\bb)}\in F$, we have $\dist\big( p, F \big) \leq \dist\big(p, C_{\bb}^{f(\bb)} \big) <r$.
  \item \textbf{$p$ is a border point}: By~\autoref{table:general-d:construction}, $p\in \border[\bb]$ for some $\bb\in \VV$. Then, \autoref{lem:core-intersects-all-border-general-d} implies that $\dist\big( p, C_{\bb}^{f(\bb)}\big) <2r$. Since $C_{\bb}^{f(\bb)}\in F$, we have $\dist\big( p, F \big)\leq \dist\big(p, C_{\bb}^{f(\bb)} \big) <2r$. \qedhere
\end{itemize}
\end{proof}

\begin{lemma}
\normalfont
If $p$ is a secondary point, then $\dist(p, F)<2r$.
\label{lem:secondary-at-most-2r}
\end{lemma}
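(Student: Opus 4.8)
The plan is to show that every secondary point $p$ lies within distance $<2r$ of some point in $F = \{C_{\ba}^{f(\ba)} : \ba \in \VV\}$. Let $p$ be a secondary point. By the construction in \autoref{table:general-d:construction}, $p = S_{\{\bolda,\bolda'\}}^{\ell}$ for some edge $(\bolda, \bolda')$ of $G_{\mathcal{I}}$ and some $\ell \in [\delta]$; without loss of generality $\bolda' = \bolda \oplus \bolde_i$. Since $(\bolda, \bolda')$ is an edge of $G_{\mathcal{I}}$, there is a binary constraint $\big\langle (\bolda,\bolda'), R_{\bolda,\bolda'}\big\rangle$, and because $f$ is a satisfying assignment we have $\big(f(\bolda), f(\bolda')\big) \in R_{\bolda,\bolda'}$. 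By \autoref{defn:d-dimensional-geometric-leqcsp}, this means $f(\bolda)[i] \geq f(\bolda')[i]$.

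**The case split and which candidate center to use**

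Now I would case on where $\ell$ sits relative to $f(\bolda)[i]$ and $f(\bolda')[i]$. Write $x := f(\bolda)[i]$ and $y := f(\bolda')[i]$, so $x \geq y$. If $\ell \leq x$, then taking the center $C_{\bolda}^{f(\bolda)} \in F$, part~(1) of \autoref{lem:how-H-intersects-with-geq-general-d} (applied with $\boldx = f(\bolda) \in R_{\bolda}$ and $\ell \le \bx[i] = x$) gives $\dist\big(C_{\bolda}^{f(\bolda)}, S_{\{\bolda,\bolda'\}}^{\ell}\big) < 2r$, so $\dist(p, F) < 2r$. Otherwise $\ell > x \geq y$, so in particular $\ell > y$; then taking the center $C_{\bolda'}^{f(\bolda')} \in F$, part~(3) of \autoref{lem:how-H-intersects-with-geq-general-d} (applied with $\boldy = f(\bolda') \in R_{\bolda'}$ and $\ell > \by[i] = y$) gives $\dist\big(C_{\bolda'}^{f(\bolda')}, S_{\{\bolda,\bolda'\}}^{\ell}\big) < 2r$, so again $\dist(p, F) < 2r$. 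Since these two cases are exhaustive, $\dist(p, F) < 2r$ for every secondary point $p$, which completes the lemma.

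**Where the difficulty lies**

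The only real content is recognizing that the $\geq$-constraint $f(\bolda)[i] \geq f(\bolda')[i]$ is exactly what guarantees the two cases $\ell \leq f(\bolda)[i]$ and $\ell > f(\bolda')[i]$ together cover all of $[\delta]$: if $\ell$ were strictly between $f(\bolda')[i]$ and $f(\bolda)[i]$ in the wrong order, neither secondary-point lemma would apply, and this is precisely the gap the reduction is designed to exploit in the hard direction. Here, because the assignment satisfies the constraint, no such bad $\ell$ exists. Everything else is a direct invocation of \autoref{lem:how-H-intersects-with-geq-general-d}; there are no calculations to grind through since the distance bounds are already established there. I expect no genuine obstacle — the proof is a two-line case analysis once the constraint semantics are unpacked.
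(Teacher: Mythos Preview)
Your proposal is correct and essentially identical to the paper's proof: both use the satisfied binary constraint $f(\bolda)[i]\geq f(\bolda')[i]$ to split on $\ell$ and invoke parts~(1) and~(3) of \autoref{lem:how-H-intersects-with-geq-general-d} to exhibit a nearby center from $F$. Your case split on $\ell \le x$ versus $\ell > x$ is a slightly cleaner (non-overlapping) version of the paper's ``$\ell\le f(\bolda)[i]$ or $\ell>f(\bolda\oplus\bolde_i)[i]$'', but the argument is the same.
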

\begin{proof}
If $p$ is a secondary point, then by~\autoref{table:general-d:construction} and~\autoref{table-special-sets-general-d} it follows that there exists $\ba\in \VV, i\in [d]$ and $\ell\in [\delta]$ such that $p = S_{\{\ba, \ba\oplus \be_i\}}^{\ell}$. Note that $C_{\ba}^{f(\ba)}\in F$ and $C_{\ba\oplus \be_i}^{f(\ba\oplus \be_i)}\in F$. We now prove the lemma by showing that $\min \Big\{ \dist \big(p, C_{\ba}^{f(\ba)} \big) ; \dist \big(p, C_{\ba\oplus \be_i}^{f(\ba\oplus \be_i)} \big) \Big\}<2r$. Since $f:\VV\to \DD$ is a satisfying assignment, the binary constraint on $\ba$ and $\ba\oplus \be_i$ is satisfied, i.e., $\delta\geq f(\ba)[i]\geq f(\ba\oplus \be_i)[i]\geq 1$. Since $\ell\in [\delta]$ this implies that either $\ell\leq f(\ba)[i]$ or $\ell> f(\ba\oplus \be_i)[i]$. The following implications complete the proof:
\begin{itemize}
  \item If $\ell\leq f(\ba)[i]$, then~\autoref{lem:how-H-intersects-with-geq-general-d}(1) implies that $\dist\big( C_{\ba}^{f(\ba)}, p\big) < 2r$.

  \item If $\ell> f(\ba\oplus \be_i)[i]$, then~\autoref{lem:how-H-intersects-with-geq-general-d}(3) implies that $\dist\big( C_{\ba\oplus \be_i}^{f(\ba\oplus \be_i)}, p\big) < 2r$. \qedhere
\end{itemize}
\end{proof}

\noindent  From~\autoref{table-special-sets-general-d},~\autoref{lem:general-d-udg-dominates-D} and~\autoref{lem:secondary-at-most-2r} it follows that \OPT for the instance $\mathcal{U}$ of $|\VV|$-\kcenter is $<2r$.

\subsection{{\normalsize $\mathcal{I}$ does not have a satisfying assignment $\Rightarrow$ \OPT for the instance $\mathcal{U}$ of $|\VV|$-\kcenter is $\geq 2r(1+\epsilon)$}}
\label{subsec:k-center-general-d-hard}

Suppose that the instance $\mathcal{I}=(\VV,\DD,\CC)$ of $d$-dimensional geometric \geqcsp does not have a satisfying assignment. We want to now show that \OPT for the instance $\mathcal{U}$ of $|\VV|$-\kcenter is $\geq 2r(1+\epsilon)$. Fix any set $Q\subseteq \mathcal{U}$ of size $|\VV|$: it is sufficient to show that
\begin{equation}\label{eqn:opt-less-2r-eps}
\OPT(Q):=\Big(\max_{u\in \mathcal{U}} \big( \min_{v\in Q} \dist(u,v) \big) \Big)\geq 2r(1+\epsilon)
\end{equation}
We consider two cases: either $\big|Q\cap \core[\ba]\big|=1$ for each $\ba\in \VV$ (\autoref{lem:exactly-one-per-core}) or not (\autoref{lem:not-exactly-one-per-core}).

\begin{lemma}
\normalfont
If $\big|Q\cap \core[\ba]\big|=1$ for each $\ba\in \VV$ then $\OPT(Q)\geq 2r(1+\epsilon)$.
\label{lem:exactly-one-per-core}
\end{lemma}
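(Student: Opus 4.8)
The plan is to argue by contradiction: suppose $\OPT(Q) < 2r(1+\epsilon)$. Under the hypothesis, for each $\ba\in\VV$ there is exactly one point of $Q$ inside $\core[\ba]$; write it as $C_{\ba}^{g(\ba)}$ for some $g(\ba)\in R_{\ba}\subseteq[\delta]^d$. This defines a candidate assignment $g:\VV\to\DD$. Since $\mathcal{I}$ has no satisfying assignment and every $g(\ba)$ already satisfies the unary constraint (by $g(\ba)\in R_\ba$), some binary constraint must be violated: there exist $\ba\in\VV$ and $i\in[d]$ with $\ba':=\ba\oplus\be_i\in\VV$ such that $g(\ba)[i] < g(\ba')[i]$. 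The goal is to exhibit a point $u\in\mathcal{U}$ whose distance to every point of $Q$ is at least $2r(1+\epsilon)$, contradicting $\OPT(Q)<2r(1+\epsilon)$.

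The witness point will be a secondary point $S_{\{\ba,\ba'\}}^{\ell}$ for a suitable $\ell\in[\delta]$. Since $g(\ba)[i] < g(\ba')[i]$, we may pick $\ell$ with $g(\ba)[i] < \ell \le g(\ba')[i]$; then $\ell > g(\ba)[i]$ and $\ell \le g(\ba')[i]$. First I would rule out the two "close" primary points: by \autoref{lem:how-H-intersects-with-geq-general-d}(2), $\ell > g(\ba)[i]$ gives $\dist(C_{\ba}^{g(\ba)}, S_{\{\ba,\ba'\}}^{\ell}) \ge 2r(1+\epsilon)$, and by \autoref{lem:how-H-intersects-with-geq-general-d}(4), $\ell \le g(\ba')[i]$ gives $\dist(C_{\ba'}^{g(\ba')}, S_{\{\ba,\ba'\}}^{\ell}) \ge 2r(1+\epsilon)$. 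So the only core points of $Q$ that could possibly be within distance $2r(1+\epsilon)$ of $S_{\{\ba,\ba'\}}^{\ell}$ lie in $\core[\ba'']$ for some $\ba''\notin\{\ba,\ba'\}$; but \autoref{lem:connector-far-away-from-other-cores} says every such point is at distance $\ge 2r(1+\epsilon)$ from every point of $\mathcal{S}_{\{\ba,\ba'\}}$. Hence no core point of $Q$ is within distance $2r(1+\epsilon)$ of $S_{\{\ba,\ba'\}}^{\ell}$.

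It remains to handle the border and secondary points of $Q$ (those not in any $\core[\cdot]$). For border points I would invoke \autoref{lem:border-doesnt-intersect-other-connectors-general-d} in the contrapositive direction together with a direct distance estimate: a border point close to $S_{\{\ba,\ba'\}}^{\ell}$ would have to be $B_{\ba}^{+i}$ or $B_{\ba'}^{-i}$ (the two border points sitting on the same axis-segment), and one computes from \autoref{table:general-d:construction} that even these are at distance $\ge 2r(1+\epsilon)$ from $S_{\{\ba,\ba'\}}^{\ell}$ — indeed $B_{\ba}^{+i}$ sits at parameter $r(1-\epsilon)$ along $\be_i$ while $S_{\{\ba,\ba'\}}^{\ell}$ sits at parameter $(1-\epsilon)2r+\epsilon\ell \ge (1-\epsilon)2r+\epsilon$, so their $i$-th coordinates already differ by $\ge r(1-\epsilon)+\epsilon = r(1+\epsilon)$... one must be slightly more careful and pick the extreme secondary point $\ell=1$, or argue using the other offset coordinates; a short case check finishes it. For secondary points $S_{\{\bb,\bb'\}}^{m}$ with $\{\bb,\bb'\}\neq\{\ba,\ba'\}$, a direct coordinate computation from \autoref{table:general-d:construction} shows they are far from $S_{\{\ba,\ba'\}}^{\ell}$ (either the edges are non-incident, giving a large coordinate gap, or they share an endpoint but lie along different axes). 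The main obstacle is precisely this last bookkeeping: ensuring that the chosen witness $\ell$ simultaneously escapes all border and foreign-secondary points of $Q$; I expect this forces choosing $\ell$ at the boundary of its allowed range (e.g. $\ell = g(\ba')[i]$ or $\ell=1$) and then a finite, if tedious, verification using the bounds in \autoref{eqn:always-to-be-cited}. Once every point of $Q$ is shown to be at distance $\ge 2r(1+\epsilon)$ from $S_{\{\ba,\ba'\}}^{\ell}\in\mathcal{U}$, we get $\OPT(Q)\ge \dist(S_{\{\ba,\ba'\}}^{\ell},Q)\ge 2r(1+\epsilon)$, completing the proof.
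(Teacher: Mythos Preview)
Your core strategy matches the paper's: define the assignment $g$ from the unique core point per variable, find a violated binary constraint on $\ba,\ba'=\ba\oplus\be_i$, and exhibit a secondary witness $S_{\{\ba,\ba'\}}^{\ell}$ that is $\geq 2r(1+\epsilon)$ from every point of $Q$ using \autoref{lem:how-H-intersects-with-geq-general-d}(2),(4) and \autoref{lem:connector-far-away-from-other-cores}. (The paper picks $\ell=g(\ba')[i]$; your choice of any $\ell$ in the range $(g(\ba)[i],\,g(\ba')[i]]$ works equally well.)

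However, you have missed a one-line observation that the paper makes first, and because of that the entire second half of your proposal is unnecessary and you have left it incomplete. The hypothesis is $|Q\cap\core[\ba]|=1$ for every $\ba\in\VV$, and we also know $|Q|=|\VV|$ (this is fixed in the surrounding context). Since the sets $\core[\ba]$ are pairwise disjoint, these two facts together force $Q=\bigcup_{\ba\in\VV}(Q\cap\core[\ba])$, i.e.\ $Q$ consists \emph{entirely} of core points. There are simply no border points and no secondary points in $Q$ at all. Once you observe this, \autoref{lem:how-H-intersects-with-geq-general-d}(2),(4) and \autoref{lem:connector-far-away-from-other-cores} already cover every point of $Q$, and the proof is done. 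All of your discussion about ``It remains to handle the border and secondary points of $Q$'', the contrapositive of \autoref{lem:border-doesnt-intersect-other-connectors-general-d}, the tedious case checks, and the worry about choosing $\ell$ at the boundary of its range can be deleted. The contradiction framing is also superfluous; the paper argues directly.
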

\begin{proof}
Since $|Q|=|\VV|$ and $|Q\cap \core[\ba]|=1$ for each $\ba\in \VV$ it follows that the only points in $Q$ are core points (see~\autoref{table:general-d:construction} for definition) and moreover $Q$ contains exactly one core point corresponding to each element from $\VV$. Let $\phi: \VV \to [\delta]^d$ be the function such that $Q\cap \core[\ba] = C_{\ba}^{\phi(\ba)}$. By~\autoref{table:general-d:construction}, it follows that $\phi(a)\in \R_{\ba}$ for each $\ba\in \VV$.

Recall that we are assuming in this section that the instance $\mathcal{I}=(\VV,\DD,\CC)$ of $d$-dimensional geometric \geqcsp does not have a satisfying assignment. Hence, in particular, the function $\phi: \VV \to [\delta]^d$ is not a satisfying assignment for $\mathcal{I}$. All unary constraints are satisfied since $\phi(a)\in \R_{\ba}$ for each $\ba\in \VV$. Hence, there is some binary constraint which is not satisfied by $\phi$: let this constraint be violated for the pair $\ba, \ba\oplus \be_i$ for some $\ba\in \VV$ and $i\in [d]$. Let us denote $\ba\oplus \be_i$ by $\ba'$. The violation of the binary constraint on $\ba$ and $\ba\oplus \be_i$ by $\phi$ implies that $1\leq \phi(\ba)[i]<\phi(\ba')[i]\leq \delta$. We now show that $\dist\big(Q, S_{\{\ba,\ba'\}}^{\phi(\ba')[i]} \big)\geq 2r(1+\epsilon)$ which, in turn, implies that $\OPT(Q)\geq 2r(1+\epsilon)$. The following implications complete the proof:
\begin{itemize}
  \item \autoref{lem:how-H-intersects-with-geq-general-d}(2) implies that $\dist\big( S_{\{\ba,\ba'\}}^{\phi(\ba')[i]}, C_{\ba}^{\phi(\ba)} \big) \geq 2r(1+\epsilon)$.

  \item \autoref{lem:how-H-intersects-with-geq-general-d}(4) implies that $\dist\big( S_{\{\ba,\ba'\}}^{\phi(\ba')[i]}, C_{\ba'}^{\phi(\ba')} \big) \geq 2r(1+\epsilon)$.

  \item Consider any point $s\in Q\setminus \big\{ C_{\ba}^{\phi(\ba)},  C_{\ba'}^{\phi(\ba')} \big\}$. Then $s\in \core[\ba'']$ for some $\ba''\notin \big\{ \ba, \ba' \big\}$. \autoref{lem:connector-far-away-from-other-cores} implies $\dist\big( S_{\{\ba, \ba'\}}^{\phi(\ba')[i]}, s \big)\geq 2r(1+\epsilon)$. \qedhere
\end{itemize}
\end{proof}

\begin{lemma}
\normalfont
If there exists $\ba\in \VV$ such that $\big|Q\cap \core[\ba]\big|\neq 1$ then $\OPT(Q)\geq 2r(1+\epsilon)$.
\label{lem:not-exactly-one-per-core}
\end{lemma}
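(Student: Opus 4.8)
The plan is to fix a $Q\subseteq\mathcal U$ with $|Q|=|\VV|$ and derive a contradiction from the assumption $\OPT(Q)<2r(1+\epsilon)$ by a counting argument: I will force $Q$ to contain strictly more than $|\VV|$ points. Throughout write $\mathcal S_e$ for the secondary points of an edge $e$ of $G_{\mathcal I}$.

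\emph{Reduction to an empty core.} The core sets $\{\core[\ba]\}_{\ba\in\VV}$ are pairwise disjoint, so $\sum_{\ba}|Q\cap\core[\ba]|\le|Q|=|\VV|$. If the hypothesized $\ba$ has $|Q\cap\core[\ba]|=0$ we are done; if $|Q\cap\core[\ba]|\ge 2$, then not every core can be hit (else the sum would exceed $|\VV|$), so in either case the set $Z:=\{\ba\in\VV:Q\cap\core[\ba]=\emptyset\}$ is non-empty. Now assume for contradiction $\OPT(Q)<2r(1+\epsilon)$, i.e.\ every point of $\mathcal U$ lies within distance $<2r(1+\epsilon)$ of some point of $Q$.

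\emph{Each $\ba\in Z$ costs two non-core points.} Set $N'[\ba]:=\border[\ba]\cup\bigcup_{e\ni\ba}\mathcal S_e$ (these are all non-core points) and $S_\ba:=Q\cap N'[\ba]$. By \autoref{lem:border-doesnt-intersect-other-connectors-general-d}, any point within $<2r(1+\epsilon)$ of $B_\ba^{+i}$ lies in $\mathcal D[\ba]\cup\mathcal S_{\{\ba,\ba\oplus\be_i\}}$, and since $\ba\in Z$ this point is in $\border[\ba]\cup\mathcal S_{\{\ba,\ba\oplus\be_i\}}\subseteq N'[\ba]$ (symmetrically for $B_\ba^{-i}$); hence all $2d$ border points of $\ba$ are covered by points of $S_\ba$. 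If $|S_\ba|\le 1$, a single point $q^\ast\in Q$ would cover all of them; applying the previous fact to $B_\ba^{+1}$ and $B_\ba^{-1}$ and using that $\mathcal S_{\{\ba,\ba\oplus\be_1\}}$, $\mathcal S_{\{\ba,\ba\ominus\be_1\}}$, $\mathcal D[\ba]$ are pairwise disjoint, we get $q^\ast\in\mathcal D[\ba]$, so $q^\ast\in\border[\ba]$ (as $\ba\in Z$), say $q^\ast=B_\ba^{\sigma}$; but $q^\ast$ then covers its antipode $B_\ba^{-\sigma}$, contradicting \autoref{lem:borders-dont-intersect-2d} (whose proof in fact gives $\dist(B_\ba^{+i},B_\ba^{-i})=2r(1+\epsilon)$). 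Hence $|S_\ba|\ge 2$ for every $\ba\in Z$.

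\emph{Bookkeeping.} Partition $Q$ into its core, border, and secondary points; the core part contains at least $|\VV|-|Z|$ points (one per variable not in $Z$, cores being disjoint). For the non-core part, form the bipartite graph with left vertex set $Z$, right vertex set $U:=\bigcup_{\ba\in Z}S_\ba$, and edges $(\ba,u)$ for $u\in S_\ba$. Every $u\in U$ has degree at most $2$: a border point of $\ba$ lies in $N'[\ba']$ only for $\ba'=\ba$, and a secondary point of $\mathcal S_{\{\ba,\ba'\}}$ lies in $N'[\ba'']$ only for $\ba''\in\{\ba,\ba'\}$; every $\ba\in Z$ has degree $|S_\ba|\ge 2$. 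Counting edges yields $|U|\ge|Z|$, and this is an equality only if both bounds are tight, i.e.\ every $u\in U$ is a secondary point shared by two elements of $Z$ and $|S_\ba|=2$ for all $\ba$. In that case $Q\cap\border[\ba]=\emptyset$ for each $\ba\in Z$, so all $2d$ border points of $\ba$ are covered by secondary points; but by \autoref{lem:border-doesnt-intersect-other-connectors-general-d} a point of $\mathcal S_{\{\ba,\ba\oplus\be_i\}}$ can cover no border point of $\ba$ other than $B_\ba^{+i}$, forcing $2d\ge 4$ distinct secondary points in $S_\ba$ and contradicting $|S_\ba|=2$. Therefore $|U|\ge|Z|+1$, and $|Q|\ge(|\VV|-|Z|)+|U|\ge|\VV|+1$, contradicting $|Q|=|\VV|$; hence $\OPT(Q)\ge 2r(1+\epsilon)$.

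The step I expect to be the main obstacle is the final one: the secondary sets $\mathcal S_e$ are genuinely shared between the two endpoints of $e$, so the naive ``two new points per $\ba\in Z$'' overcounts — indeed one cannot even conclude that the two members of one antipodal pair $B_\ba^{+i},B_\ba^{-i}$ are covered by distinct points of $Q$. What rescues the count is the bipartite double-counting together with the fact that a secondary point covers at most one border point of a given variable (which is exactly what inflates $|S_\ba|$ to $\ge 2d$ once $Q\cap\border[\ba]=\emptyset$), and this is where the crucial $+1$ surplus comes from; the supporting ``candidate-set'' disjointness claims are routine consequences of \autoref{lem:border-doesnt-intersect-other-connectors-general-d}.
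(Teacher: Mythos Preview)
Your argument is correct, and it uses the same two key ingredients as the paper (Lemma~\ref{lem:border-doesnt-intersect-other-connectors-general-d} to localize the cover of each border point, and Lemma~\ref{lem:borders-dont-intersect-2d} to rule out a single border point covering its antipode), but the decomposition is genuinely different. The paper works with $\mathcal{D}[\ba]=\core[\ba]\cup\border[\ba]$ first: it partitions $\VV$ into $I_0,I_1,I_{\ge 2}$ according to $|Q\cap\mathcal{D}[\ba]|$, observes that every $\ba\in I_0$ forces $Q$ to contain a point from each of the $2d$ incident secondary sets, and (after halving for sharing) gets $|Q\cap\textsc{Secondary}|\ge 2|I_0|$; the inequality $|\VV|=|Q|\ge|I_1|+2|I_{\ge 2}|+2|I_0|$ then forces $|I_0|=|I_{\ge 2}|=0$. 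This immediately yields that $Q$ contains \emph{no} secondary points at all, after which the ``core vs.\ border'' refinement is a one-line application of Lemma~\ref{lem:borders-dont-intersect-2d}. You instead cut straight to $Z=\{\ba:Q\cap\core[\ba]=\emptyset\}$ and run a bipartite double-count between $Z$ and the non-core points used to cover its borders; this is more direct but costs you the extra equality-case analysis (the ``$+1$'' via the observation that a single secondary point covers at most one border point of a given variable). What the paper's route buys is that the $d\ge 2$ slack is spent once, cleanly, to kill $I_0$ and $I_{\ge 2}$ simultaneously and to eliminate secondary points globally; what your route buys is that you never need the intermediate claim about $\mathcal{D}[\ba]$ and you handle border-and-secondary points uniformly as ``non-core''.
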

\begin{proof}
Suppose that $\OPT(Q)< 2r(1+\epsilon)$. To prove the lemma, we will now show that this implies $|Q\cap \core[\ba]|=1$ for each $\ba\in \VV$. This is done via the following two claims, namely~\autoref{clm:exactly-one-D} and~\autoref{clm:exactly-one-core}.

\begin{claim}
$\big|Q\cap \mathcal{D}[\ba]\big|=1$ for each $\ba\in \VV$
\label{clm:exactly-one-D}
\end{claim}
\begin{proof}
Define three sets $I_0, I_1$ and $I_{\geq 2}$ as follows:
\begin{eqnarray}
  I_0 := \big\{ \ba\in \VV : \big|Q\cap \mathcal{D}[\ba]\big|=0 \big\}  \\
  I_1 := \big\{ \ba\in \VV : \big|Q\cap \mathcal{D}[\ba]\big|=1 \big\} \\
  I_{\geq 2} := \big\{ \ba\in \VV : \big|Q\cap \mathcal{D}[\ba]\big|\geq 2 \big\}
\end{eqnarray}
By definition, we have
\begin{equation}
|I_0|+ |I_1| + |I_{\geq 2}| = |\VV|
\label{eqn:sum-of-indices-is-k^2}
\end{equation}

Consider a variable $\boldb\in I_0$. \textcolor{black}{Since $\dist\left(Q,B_{\boldb}^{+i}\right)$ and $\dist\left(Q,B_{\boldb}^{-i}\right)<2r(1+\epsilon)$, and $Q\cap \mathcal{D}[\boldb]=\emptyset$}, \autoref{lem:border-doesnt-intersect-other-connectors-general-d} implies that for each $i\in [d]$
\begin{enumerate}
\item[(i)] $Q$ must contain a point from $\mathcal{S}_{\{\boldb,\boldb\oplus\bolde_i\}}$ since $Q\cap \mathcal{D}[\boldb]=\emptyset$, and

\item[(ii)] $Q$ must contain a point from $\mathcal{S}_{\{\boldb,\boldb\ominus\bolde_i\}}$ since $Q\cap \mathcal{D}[\boldb]=\emptyset$
\end{enumerate}
Since each secondary point can be \emph{``charged''} to two variables in $\VV$ (recall the definition of secondary points from~\autoref{table:general-d:construction}: each secondary point is indexed by a set of two variables $\{\bb,\bb'\}$ such that $\bb'=\bb\oplus \be_i$ for some $i\in [d]$), it follows that $Q$ contains $\geq \frac{2d}{2}=d\geq 2$ \emph{distinct} secondary points corresponding to each variable in $I_{0}$.
Therefore, we have
\begin{align}
&|I_0|+|I_1|+|I_{\geq 2}| = \big|\VV\big| \tag{from~\autoref{eqn:sum-of-indices-is-k^2}}\\
&= |Q| \tag*{} \\
&\geq |Q\cap \textsc{Primary}| + |Q\cap \textsc{Secondary}| \tag{since $\textsc{Primary}\cap \textsc{Secondary}=\emptyset$ }\\
&\geq \Big(|I_1| + 2|I_{\geq 2}|\Big) + |Q\cap \textsc{Secondary}| \tag{by definition of $I_1$ and $I_{\geq 2}$}\\
&\geq \Big(|I_1| + 2|I_{\geq 2}|\Big) + 2|I_0|
\end{align}
\textcolor{black}{where the last inequality follows because $Q$ contains at least $2$ secondary points corresponding to each variable in $I_{0}$}.
Hence, we have $|I_0|+|I_1|+|I_{\geq 2}| \geq 2|I_0| + |I_1| + 2|I_{\geq 2}|$ which implies $|I_0|=0=|I_{\geq 2}|$. From~\autoref{eqn:sum-of-indices-is-k^2}, we get $|I_1|=\big|\VV\big|$, i.e., $\big|Q\cap \mathcal{D}[\ba]\big|=1$ for each $\ba\in \VV$. This concludes the proof of~\autoref{clm:exactly-one-D}.
\end{proof}

Since $|Q|=\big|\VV\big|$ \textcolor{black}{and $\mathcal{D}[\ba] \cap \mathcal{D}[\bb] = \emptyset$ for distinct $\ba,\bb \in \VV$},~\autoref{clm:exactly-one-D} implies that
\begin{equation}\label{eqn:Q-has-no-secondary-pts}
Q\ \text{contains no secondary points}
\end{equation}

\textcolor{black}{We now prove that $Q$ doesn't contain border points either}.

\begin{claim}
$\big|Q\cap \core[\ba]\big|=1$ for each $\ba\in \VV$
\label{clm:exactly-one-core}
\end{claim}
\begin{proof}
Fix any $\ba\in \VV$. From~\autoref{clm:exactly-one-D}, we know that $\big|Q\cap \mathcal{D}[\ba]\big|=1$. Suppose that this unique point in $Q\cap \mathcal{D}[\ba]$ is from $\border[\ba]$. Without loss of generality, let $Q\cap \mathcal{D}[\ba]=\big\{B_{\ba}^{+i}\big\}$ for some $i\in [d]$. Since $\OPT(Q)<2r(1+\epsilon)$, it follows that $\dist\big(Q, B_{\ba}^{-i}\big)<2r(1+\epsilon)$. Hence,~\autoref{lem:border-doesnt-intersect-other-connectors-general-d}(2) implies that $Q\cap \Big( \mathcal{D}[\bolda]\ \bigcup\ \mathcal{S}_{\{\bolda,\bolda\ominus\bolde_{i} \}}\Big) \neq \emptyset$.
\textcolor{black}{Since $Q$ contains no secondary points (\autoref{eqn:Q-has-no-secondary-pts}), we have $ Q\cap \left( \mathcal{D}[\bolda]\ \bigcup\ \mathcal{S}_{\{\bolda,\bolda\ominus\bolde_{i} \}}\right) = Q\cap \mathcal{D}[\ba] = \left\{B_{\ba}^{+i}\right\}$. But from \autoref{lem:borders-dont-intersect-2d} we know $\dist \left( B_{\ba}^{+i}, B_{\ba}^{-i}\right)\geq 2r(1+\epsilon)$. We thus obtain a contradiction.}
This concludes the proof of~\autoref{clm:exactly-one-core}.
\end{proof}

Therefore, we have shown that $\OPT(Q)< 2r(1+\epsilon)$ implies $\big|Q\cap \core[\ba]\big|=1$ for each $\ba\in \VV$. This concludes the proof of~\autoref{lem:not-exactly-one-per-core}.
\end{proof}

\subsection{Finishing the proof of~\autoref{thm:dom-set-d-dimensions}}
\label{subsec:finishing-the-proof}

Finally, we are ready to prove~\autoref{thm:dom-set-d-dimensions} which is restated below:

\domsetd*

\begin{proof}

Given an instance $\mathcal{I}=(\VV,\DD,\CC)$ of a $d$-dimensional geometric \geqcsp, we build an instance $\mathcal{U}$ of $|\VV|$-\kcenter in $\mathbb{R}^d$ given by the reduction in~\autoref{subsec:redn-general-d}. This reduction has the property that
\begin{itemize}
  \item if $\mathcal{I}$ has a satisfying assignment then \OPT for the instance $\mathcal{U}$ of $|\VV|$-\kcenter is $< 2r$ (\autoref{subsec:k-center-general-d-easy}), and
  \item if $\mathcal{I}$ does not have a satisfying assignment then \OPT for the instance $\mathcal{U}$ of $|\VV|$-\kcenter is $\geq 2r(1+\epsilon^*)$ (\autoref{subsec:k-center-general-d-hard})
\end{itemize}
where $r=1/4$ and \textcolor{black}{$\epsilon^* = \dfrac{r^2}{(d-1)\delta^2}\geq \dfrac{1}{16(d-1)|\DD|}$, since $|\DD|=\left|[\delta]^d\right| \geq \delta^2$}. Hence, any algorithm for the $|\VV|$-center problem which has an approximation factor $\leq (1+\epsilon^*)$ can solve the $d$-dimensional geometric \geqcsp.
Note that the instance $\mathcal{U}$ of $k$-\kcenter in $\mathbb{R}^d$ has $k=|\VV|$ and the number of points $n\leq |\VV|\cdot 2d + |\CC| +|\VV|^{2}\cdot \delta = |\mathcal{I}|^{O(1)}$ where $|\mathcal{I}|=|\VV|+|\DD|+|\CC|$. We now derive the two lower bounds claimed in the theorem:
\begin{description}
  \item[\textbf{- (Inapproximability result)}]  Suppose that there exists $d\geq 2$ such that the $k$-center on $n$ points in $\mathbb{R}^d$ admits an $(1+\epsilon)$-approximation algorithm in $f(k)\cdot \Big(\frac{1}{\epsilon}\Big)^{o(k^{1-1/d})}\cdot n^{o(k^{1-1/d})}$ time for some computable function $f$. As argued above, using a $(1+\epsilon^*)$-approximation for the $k$-center problem with $k=|\VV|$ and $n=|\mathcal{I}|^{O(1)}$ points can solve the $d$-dimensional geometric \geqcsp problem. Recall that $16(d-1)|\cdot  \mathcal{I}|\geq 16(d-1)|\cdot \DD|\geq \frac{1}{\epsilon^*}$ since $|I|=|\VV|+|\DD|+|\CC|$, and hence we have an algorithm for the $d$-dimensional geometric \geqcsp problem which runs in time $f(|\VV|)\cdot (16d)^{o(k^{1-1/d})}\cdot |\mathcal{I}|^{o(k^{1-1/d})}$ which contradicts~\autoref{thm:marx-sidiropoulos}.

  \item[\textbf{- (Lower bound for exact algorithm)}] Suppose that there exists $d\geq 2$ such that the $k$-center on $n$ points in $\mathbb{R}^d$ admits an exact algorithm in $f(k)\cdot n^{o(k^{1-1/d})}$ time for some computable function $f$. As argued above\footnote{The argument above is actually stronger: even a $(1+\epsilon^*)$-approximation algorithm for $k$-center can solve $d$-dimensional geometric \geqcsp}, solving the $k$ center problem with $k=|\VV|$ and $n=|\mathcal{I}|^{O(1)}$ points can solve the $d$-dimensional geometric \geqcsp problem. Hence, we have an algorithm for the $d$-dimensional geometric \geqcsp problem which runs in time $f(|\VV|)\cdot |\mathcal{I}|^{o(k^{1-1/d})}$ which again contradicts~\autoref{thm:marx-sidiropoulos}. \qedhere
\end{description}
\end{proof}

\newpage

\bibliography{references}

\begin{thebibliography}{43}
\providecommand{\natexlab}[1]{#1}
\providecommand{\url}[1]{\texttt{#1}}
\expandafter\ifx\csname urlstyle\endcsname\relax
  \providecommand{\doi}[1]{doi: #1}\else
  \providecommand{\doi}{doi: \begingroup \urlstyle{rm}\Url}\fi

\bibitem[Aboulker et~al.(2017)Aboulker, Brettell, Havet, Marx, and
  Trotignon]{DBLP:journals/corr/AboulkerBHMT15}
Pierre Aboulker, Nick Brettell, Fr{\'{e}}d{\'{e}}ric Havet, D{\'{a}}niel Marx,
  and Nicolas Trotignon.
\newblock Coloring graphs with constraints on connectivity.
\newblock \emph{Journal of Graph Theory}, 85\penalty0 (4):\penalty0 814--838,
  2017.

\bibitem[Agarwal and Procopiuc(2002)]{agarwal-procopiuc}
Pankaj~K. Agarwal and Cecilia~Magdalena Procopiuc.
\newblock Exact and approximation algorithms for clustering.
\newblock \emph{Algorithmica}, 33\penalty0 (2):\penalty0 201--226, 2002.

\bibitem[Alber and Fiala(2004)]{DBLP:journals/jal/AlberF04}
Jochen Alber and Jir{\'{\i}} Fiala.
\newblock Geometric separation and exact solutions for the parameterized
  independent set problem on disk graphs.
\newblock \emph{J. Algorithms}, 52\penalty0 (2):\penalty0 134--151, 2004.

\bibitem[Becker et~al.()Becker, Klein, and Saulpic]{DBLP:conf/esa/BeckerKS18}
Amariah Becker, Philip~N. Klein, and David Saulpic.
\newblock {Polynomial-Time Approximation Schemes for k-center, k-median, and
  Capacitated Vehicle Routing in Bounded Highway Dimension}.
\newblock In \emph{ESA 2018}, volume 112, pages 8:1--8:15.

\bibitem[Bir{\'{o}} et~al.(2018)Bir{\'{o}}, Bonnet, Marx, Miltzow, and
  Rzazewski]{biro-higher-d}
Csaba Bir{\'{o}}, {\'{E}}douard Bonnet, D{\'{a}}niel Marx, Tillmann Miltzow,
  and Pawel Rzazewski.
\newblock {Fine-grained complexity of coloring unit disks and balls}.
\newblock \emph{J. Comput. Geom.}, 9\penalty0 (2):\penalty0 47--80, 2018.

\bibitem[Cabello et~al.(2011)Cabello, Giannopoulos, Knauer, Marx, and
  Rote]{dm-continuous}
Sergio Cabello, Panos Giannopoulos, Christian Knauer, D{\'{a}}niel Marx, and
  G{\"{u}}nter Rote.
\newblock {Geometric clustering: Fixed-parameter tractability and lower bounds
  with respect to the dimension}.
\newblock \emph{{ACM} Trans. Algorithms}, 7\penalty0 (4):\penalty0 43:1--43:27,
  2011.

\bibitem[Chitnis et~al.(2020)Chitnis, Feldmann, Hajiaghayi, and
  Marx]{DBLP:journals/siamcomp/ChitnisFHM20}
Rajesh~Hemant Chitnis, Andreas~Emil Feldmann, Mohammad~Taghi Hajiaghayi, and
  D{\'{a}}niel Marx.
\newblock {Tight Bounds for Planar Strongly Connected Steiner Subgraph with
  Fixed Number of Terminals (and Extensions)}.
\newblock \emph{{SIAM} J. Comput.}, 49\penalty0 (2):\penalty0 318--364, 2020.

\bibitem[Cohen{-}Addad et~al.()Cohen{-}Addad, de~Mesmay, Rotenberg, and
  Roytman]{bane}
Vincent Cohen{-}Addad, Arnaud de~Mesmay, Eva Rotenberg, and Alan Roytman.
\newblock {The Bane of Low-Dimensionality Clustering}.
\newblock In \emph{SODA 2018}, pages 441--456.

\bibitem[Cygan et~al.(2015)Cygan, Fomin, Kowalik, Lokshtanov, Marx, Pilipczuk,
  Pilipczuk, and Saurabh]{fpt-book}
Marek Cygan, Fedor~V. Fomin, Lukasz Kowalik, Daniel Lokshtanov, D{\'{a}}niel
  Marx, Marcin Pilipczuk, Michal Pilipczuk, and Saket Saurabh.
\newblock \emph{Parameterized Algorithms}.
\newblock Springer, 2015.
\newblock ISBN 978-3-319-21274-6.

\bibitem[de~Berg et~al.()de~Berg, Bodlaender, Kisfaludi{-}Bak, and
  Kolay]{tsp-higher-d}
Mark de~Berg, Hans~L. Bodlaender, S{\'{a}}ndor Kisfaludi{-}Bak, and Sudeshna
  Kolay.
\newblock {An ETH-Tight Exact Algorithm for Euclidean {TSP}}.
\newblock In \emph{FOCS 2018}, pages 450--461.

\bibitem[de~Berg et~al.(2020)de~Berg, Bodlaender, Kisfaludi{-}Bak, Marx, and
  van~der Zanden]{de-berg-higher-d}
Mark de~Berg, Hans~L. Bodlaender, S{\'{a}}ndor Kisfaludi{-}Bak, D{\'{a}}niel
  Marx, and Tom~C. van~der Zanden.
\newblock {A Framework for Exponential-Time-Hypothesis-Tight Algorithms and
  Lower Bounds in Geometric Intersection Graphs}.
\newblock \emph{{SIAM} J. Comput.}, 49\penalty0 (6):\penalty0 1291--1331, 2020.

\bibitem[Demaine et~al.(2005{\natexlab{a}})Demaine, Fomin, Hajiaghayi, and
  Thilikos]{DBLP:journals/talg/DemaineFHT05}
Erik~D. Demaine, Fedor~V. Fomin, Mohammad~Taghi Hajiaghayi, and Dimitrios~M.
  Thilikos.
\newblock Fixed-parameter algorithms for (\emph{k}, \emph{r})-center in planar
  graphs and map graphs.
\newblock \emph{{ACM} Trans. Algorithms}, 1\penalty0 (1):\penalty0 33--47,
  2005{\natexlab{a}}.

\bibitem[Demaine et~al.(2005{\natexlab{b}})Demaine, Fomin, Hajiaghayi, and
  Thilikos]{DemaineFHT05}
Erik~D. Demaine, Fedor~V. Fomin, Mohammad~Taghi Hajiaghayi, and Dimitrios~M.
  Thilikos.
\newblock Subexponential parameterized algorithms on bounded-genus graphs and
  \emph{H}-minor-free graphs.
\newblock \emph{J. {ACM}}, 52\penalty0 (6):\penalty0 866--893,
  2005{\natexlab{b}}.

\bibitem[Feldmann(2019)]{DBLP:journals/algorithmica/Feldmann19}
Andreas~Emil Feldmann.
\newblock Fixed-parameter approximations for k-center problems in low highway
  dimension graphs.
\newblock \emph{Algorithmica}, 81\penalty0 (3):\penalty0 1031--1052, 2019.

\bibitem[Feldmann and Marx(2020)]{DBLP:journals/algorithmica/FeldmannM20}
Andreas~Emil Feldmann and D{\'{a}}niel Marx.
\newblock The parameterized hardness of the k-center problem in transportation
  networks.
\newblock \emph{Algorithmica}, 82\penalty0 (7):\penalty0 1989--2005, 2020.

\bibitem[Fomin et~al.({\natexlab{a}})Fomin, Kolay, Lokshtanov, Panolan, and
  Saurabh]{FominKLPS16}
Fedor~V. Fomin, Sudeshna Kolay, Daniel Lokshtanov, Fahad Panolan, and Saket
  Saurabh.
\newblock {Subexponential Algorithms for Rectilinear Steiner Tree and
  Arborescence Problems}.
\newblock In \emph{SoCG 2016}, pages 39:1--39:15, {\natexlab{a}}.

\bibitem[Fomin et~al.({\natexlab{b}})Fomin, Lokshtanov, Marx, Pilipczuk,
  Pilipczuk, and Saurabh]{FominLMPPS16}
Fedor~V. Fomin, Daniel Lokshtanov, D{\'{a}}niel Marx, Marcin Pilipczuk, Michal
  Pilipczuk, and Saket Saurabh.
\newblock {Subexponential Parameterized Algorithms for Planar and
  Apex-Minor-Free Graphs via Low Treewidth Pattern Covering}.
\newblock In \emph{FOCS 2016}, pages 515--524, {\natexlab{b}}.

\bibitem[Fox{-}Epstein et~al.()Fox{-}Epstein, Klein, and
  Schild]{DBLP:conf/soda/Fox-EpsteinKS19}
Eli Fox{-}Epstein, Philip~N. Klein, and Aaron Schild.
\newblock {Embedding Planar Graphs into Low-Treewidth Graphs with Applications
  to Efficient Approximation Schemes for Metric Problems}.
\newblock In \emph{SODA 2019}, pages 1069--1088.

\bibitem[Girdhar and Dudek()]{clustering-robotics}
Yogesh~A. Girdhar and Gregory Dudek.
\newblock Efficient on-line data summarization using extremum summaries.
\newblock In \emph{ICRA 2012}, pages 3490--3496.

\bibitem[Gonzalez(1985)]{DBLP:journals/tcs/Gonzalez85}
Teofilo~F. Gonzalez.
\newblock Clustering to minimize the maximum intercluster distance.
\newblock \emph{Theor. Comput. Sci.}, 38:\penalty0 293--306, 1985.

\bibitem[Hakimi(1971)]{hakimi}
S.~Louis Hakimi.
\newblock Steiner's problem in graphs and its implications.
\newblock \emph{Networks}, 1\penalty0 (2):\penalty0 113--133, 1971.

\bibitem[Hennig et~al.(2015)Hennig, Meila, Murtagh, and
  Rocci]{hennig2015handbook}
Christian Hennig, Marina Meila, Fionn Murtagh, and Roberto Rocci.
\newblock \emph{Handbook of cluster analysis}.
\newblock CRC Press, 2015.

\bibitem[Hochbaum and Shmoys(1985)]{DBLP:journals/mor/HochbaumS85}
Dorit~S. Hochbaum and David~B. Shmoys.
\newblock A best possible heuristic for the \emph{k}-center problem.
\newblock \emph{Math. Oper. Res.}, 10\penalty0 (2):\penalty0 180--184, 1985.

\bibitem[Hsu and Nemhauser(1979)]{DBLP:journals/dam/HsuN79}
Wen{-}Lian Hsu and George~L. Nemhauser.
\newblock Easy and hard bottleneck location problems.
\newblock \emph{Discret. Appl. Math.}, 1\penalty0 (3):\penalty0 209--215, 1979.

\bibitem[Hwang et~al.(1993{\natexlab{a}})Hwang, Chang, and
  Lee]{DBLP:journals/algorithmica/HwangCL93}
R.~Z. Hwang, R.~C. Chang, and Richard C.~T. Lee.
\newblock {The Searching over Separators Strategy To Solve Some NP-Hard
  Problems in Subexponential Time}.
\newblock \emph{Algorithmica}, 9\penalty0 (4):\penalty0 398--423,
  1993{\natexlab{a}}.

\bibitem[Hwang et~al.(1993{\natexlab{b}})Hwang, Lee, and
  Chang]{DBLP:journals/algorithmica/HwangLC93}
R.~Z. Hwang, Richard C.~T. Lee, and R.~C. Chang.
\newblock {The Slab Dividing Approach To Solve the Euclidean $p$-Center
  Problem}.
\newblock \emph{Algorithmica}, 9\penalty0 (1):\penalty0 1--22,
  1993{\natexlab{b}}.

\bibitem[Impagliazzo and Paturi(2001)]{eth}
Russell Impagliazzo and Ramamohan Paturi.
\newblock {On the Complexity of $k$-SAT}.
\newblock \emph{J. Comput. Syst. Sci.}, 62\penalty0 (2):\penalty0 367--375,
  2001.

\bibitem[Impagliazzo et~al.(2001)Impagliazzo, Paturi, and Zane]{eth-2}
Russell Impagliazzo, Ramamohan Paturi, and Francis Zane.
\newblock {Which Problems Have Strongly Exponential Complexity?}
\newblock \emph{J. Comput. Syst. Sci.}, 63\penalty0 (4):\penalty0 512--530,
  2001.

\bibitem[Jiang et~al.(2004)Jiang, Tang, and Zhang]{jiang2004cluster}
Daxin Jiang, Chun Tang, and Aidong Zhang.
\newblock Cluster analysis for gene expression data: a survey.
\newblock \emph{IEEE Transactions on knowledge and data engineering},
  16\penalty0 (11):\penalty0 1370--1386, 2004.

\bibitem[Katsikarelis et~al.(2019)Katsikarelis, Lampis, and
  Paschos]{DBLP:journals/dam/KatsikarelisLP19}
Ioannis Katsikarelis, Michael Lampis, and Vangelis~Th. Paschos.
\newblock Structural parameters, tight bounds, and approximation for (k,
  r)-center.
\newblock \emph{Discret. Appl. Math.}, 264:\penalty0 90--117, 2019.

\bibitem[Klein and Marx({\natexlab{a}})]{DBLP:conf/icalp/KleinM12}
Philip~N. Klein and D{\'{a}}niel Marx.
\newblock {Solving Planar $k$-Terminal Cut in O$(n^{c \sqrt{k}})$ Time}.
\newblock In \emph{ICALP 2012}, pages 569--580, {\natexlab{a}}.

\bibitem[Klein and Marx({\natexlab{b}})]{KleinM14}
Philip~N. Klein and D{\'{a}}niel Marx.
\newblock {A subexponential parameterized algorithm for Subset {TSP} on planar
  graphs}.
\newblock In \emph{SODA 2014}, pages 1812--1830, {\natexlab{b}}.

\bibitem[Lokshtanov et~al.()Lokshtanov, Saurabh, and
  Wahlstr{\"{o}}m]{DBLP:conf/fsttcs/LokshtanovSW12}
Daniel Lokshtanov, Saket Saurabh, and Magnus Wahlstr{\"{o}}m.
\newblock {Subexponential Parameterized Odd Cycle Transversal on Planar
  Graphs}.
\newblock In \emph{FSTTCS 2012}, pages 424--434.

\bibitem[Marx({\natexlab{a}})]{DBLP:conf/icalp/Marx12}
D{\'{a}}niel Marx.
\newblock {A Tight Lower Bound for Planar Multiway Cut with Fixed Number of
  Terminals}.
\newblock In \emph{ICALP 2012}, pages 677--688, {\natexlab{a}}.

\bibitem[Marx({\natexlab{b}})]{DBLP:conf/iwpec/Marx06}
D{\'{a}}niel Marx.
\newblock Parameterized complexity of independence and domination on geometric
  graphs.
\newblock In Hans~L. Bodlaender and Michael~A. Langston, editors, \emph{IWPEC
  2006}, pages 154--165, {\natexlab{b}}.

\bibitem[Marx({\natexlab{c}})]{dm-esa-05}
D{\'{a}}niel Marx.
\newblock {Efficient Approximation Schemes for Geometric Problems?}
\newblock In \emph{ESA 2005}, pages 448--459, {\natexlab{c}}.

\bibitem[Marx and Pilipczuk()]{DBLP:conf/esa/MarxP15}
D{\'{a}}niel Marx and Michal Pilipczuk.
\newblock {Optimal Parameterized Algorithms for Planar Facility Location
  Problems Using Voronoi Diagrams}.
\newblock In \emph{ESA 2015}, pages 865--877.

\bibitem[Marx and Sidiropoulos()]{DBLP:conf/compgeom/MarxS14}
D{\'{a}}niel Marx and Anastasios Sidiropoulos.
\newblock The limited blessing of low dimensionality: when $1-1/d$ is the best
  possible exponent for $d$-dimensional geometric problems.
\newblock In \emph{SoCG 2014}, page~67.

\bibitem[Marx et~al.()Marx, Pilipczuk, and Pilipczuk]{MarxPP-FOCS2018}
D{\'{a}}niel Marx, Marcin Pilipczuk, and Michal Pilipczuk.
\newblock {On Subexponential Parameterized Algorithms for Steiner Tree and
  Directed Subset {TSP} on Planar Graphs}.
\newblock In \emph{FOCS 2018}, pages 474--484.

\bibitem[Moens et~al.(1999)Moens, Uyttendaele, and Dumortier]{clustering-text}
Marie{-}Francine Moens, Caroline Uyttendaele, and Jos Dumortier.
\newblock Abstracting of legal cases: The potential of clustering based on the
  selection of representative objects.
\newblock \emph{J. Am. Soc. Inf. Sci.}, 50\penalty0 (2):\penalty0 151--161,
  1999.

\bibitem[Pilipczuk et~al.()Pilipczuk, Pilipczuk, Sankowski, and van
  Leeuwen]{DBLP:conf/stacs/PilipczukPSL13}
Marcin Pilipczuk, Michal Pilipczuk, Piotr Sankowski, and Erik~Jan van Leeuwen.
\newblock {Subexponential-Time Parameterized Algorithm for Steiner Tree on
  Planar Graphs}.
\newblock In \emph{STACS 2013}, pages 353--364.

\bibitem[Smith and Wormald()]{DBLP:conf/focs/SmithW98}
Warren~D. Smith and Nicholas~C. Wormald.
\newblock Geometric separator theorems {\&} applications.
\newblock In \emph{FOCS 1998}, pages 232--243.

\bibitem[Vazirani(2001)]{vazirani-book}
Vijay~V. Vazirani.
\newblock \emph{Approximation algorithms}.
\newblock Springer, 2001.
\newblock ISBN 978-3-540-65367-7.

\end{thebibliography}
\bibliographystyle{plainnat}
\newpage

\appendix

\section{Proof of~\autoref{lem:border-doesnt-intersect-other-connectors-general-d}}
\label{app:long-lemma-proof}

\begin{proof}
  Fix some $\bolda\in \VV$ and $i\in [d]$. To show the first part of lemma, it suffices to show that if $w\in \mathcal{U}\setminus \left( \mathcal{D}[\bolda]\ \bigcup\ \mathcal{S}_{\{\bolda,\bolda\oplus\bolde_{i} \}}\right)$ then $\dist\left( w, B_{\bolda}^{+i}\right) \geq 2r(1+\epsilon)$. We argue based on whether $w$ is a primary or secondary point. Recall, $B_{\ba}^{+i} = \ba\oplus\bolde_i\cdot r(1-\epsilon)\oplus (\bone^d-\be_i)\cdot 2\epsilon \delta$.

We begin with the following two cases which arise when $w$ is a \textbf{primary} point.
\begin{enumerate}
\item[(1)] \underline{$w\in \core[\bolda']$ for some $\bolda'\neq \bolda$}: From \autoref{table:general-d:construction}, we know $w=C_{\ba'}^{\bx}$ for some $\bx\in R_{\ba'}\subseteq [\delta]^d$. That is, $w = \ba'\oplus\epsilon\cdot\bx$. Since $\ba\neq \ba'$, it follows that $\ba$ and $\ba'$ differ in at least one coordinate, say the $j$-th coordinate.
  \begin{itemize}
  \item[(i)] \underline{$i\neq j$}: In this case, we have
    \begin{align*}
      \dist\left(w, B_{\ba}^{+i} \right)
      &\geq \big|(\ba'[j]+\epsilon\cdot \bx[j]) -(\ba[j] +2\epsilon\delta) \big|, \tag{only considering the $j$-th coordinate}\\
      &\geq \big|\ba'[j]-\ba[j]\big| - \big| 2\epsilon\delta -\epsilon\cdot \bx[j] \big|, \tag{by triangle inequality}\\
      &\geq 1- 2\epsilon\delta, \tag{since $\ba'[j] \neq \ba[j]$ and $\bx[j] \in [\delta$]}\\
      &\geq 2r(1+\epsilon). \tag{from~\autoref{defn:r-epsilon-general-d} and~\autoref{eqn:always-to-be-cited}}
    \end{align*}

  \item[(ii)] \underline{$i=j$}: In this case, we have
    \begin{align*}
      \dist\left(w, B_{\ba}^{+i} \right)
      &\geq \big|\ba'[i]+\epsilon\cdot \bx[i] -\ba[i] - r(1-\epsilon) \big|, \tag{only considering the $i$-th coordinate}\\
      &\geq  \big|\ba'[i]-\ba[i]\big| - \big| r(1-\epsilon) -\epsilon\cdot \bx[i] \big|, \tag{by triangle inequality}\\
      &\geq 4r- r(1-5\epsilon), \tag{since $\ba'[i]\neq \ba[i]$, $4r =1$, and $\bx[i]\geq 1$}\\
      &\geq 2r(1+\epsilon). \tag{from~\autoref{defn:r-epsilon-general-d} and~\autoref{eqn:always-to-be-cited}}
    \end{align*}
  \end{itemize}

\item[(2)] \underline{$w\in \border[\bolda']$ for some $\bolda'\neq \bolda$}: From~\autoref{table:general-d:construction}, $w\in \left\{ B_{\ba'}^{+q}, B_{\ba'}^{-q} \right\}$ for some $q\in [d]$. Since $\ba\neq \ba'$, they differ in at least one coordinate, say the $j$-th coordinate. We now consider two cases depending on $\dist\left(\ba ,\ba'\right)$.
  \begin{itemize}
  \item[(i)] \underline{$\dist\left(\ba, \ba'\right)\geq \sqrt{2}$}:
    By the triangle inequality we have
    \[\sqrt{2}\leq \dist\left(\ba,\ba'\right)\leq \dist\left(\ba, B_{\ba}^{+i}\right) + \dist\left(B_{\ba}^{+i}, w\right) + \dist\left(w, \ba'\right).\]
    By \autoref{lem:D[a]-are-close-to-a}, it follows that $\dist\left(\ba, B_{\ba}^{+i}\right) = r(1+\epsilon)$ and $\dist(w, \ba') = r(1+\epsilon)$ (since $w\in \border[\ba']$).
    Hence, we have $\dist(B_{\ba}^{+i}, w) \geq  \sqrt{2}-2r(1+\epsilon) = 4\sqrt{2}r - 2r(1+\epsilon) \geq 2r(1+\epsilon)$, where in the last bound we have used \autoref{defn:r-epsilon-general-d} and \autoref{eqn:always-to-be-cited}.

  \item[(ii)] \underline{$\dist\left(\ba, \ba'\right)=1$}: In this case, we have $\dist\left(\ba, \ba'\right)= \big|\ba[j]-\ba'[j]\big|=1$.
    Recall $w\in \left\{ B_{\ba'}^{+q}, B_{\ba'}^{-q} \right\}$, where
    \begin{align*}
    B_{\ba'}^{+q} & = \bolda' \oplus \bolde_{q}\cdot r(1-\epsilon) \oplus (\bone^d-\be_q)\cdot 2\epsilon \delta \quad \text{ and } \\
    B_{\ba'}^{-q} & = \bolda' \ominus \bolde_{q}\cdot r(1-\epsilon) \ominus (\bone^d-\be_q)\cdot 2\epsilon \delta.
    \end{align*}
    Further recall $B_{\ba}^{+i} = \ba\oplus\bolde_i\cdot r(1-\epsilon)\oplus (\bone^d-\be_i)\cdot 2\epsilon \delta$. There are four cases to consider:
    \begin{itemize}
    \item \underline{$j\notin \{i,q\}$}: In this case, we have
      \begin{align*}
        \dist\left(w, B_{\ba}^{+i}\right) & \geq \big|\ba'[j]-\ba[j] -4\epsilon\delta \big|, \tag{only considering the $j$-th coordinate}\\
        &\geq 1 -4\epsilon\delta, \tag{by triangle inequality and $\ba[j]\neq \ba'[j]$}\\
        &\geq 2r(1+\epsilon). \tag{from~\autoref{defn:r-epsilon-general-d} and~\autoref{eqn:always-to-be-cited}}
      \end{align*}

    \item \underline{$j=i=q$}: In this case, we have
      \begin{align*}
        \dist(w, B_{\ba}^{+i}) & \geq \big|\ba'[j]-\ba[j] -2r(1-\epsilon) \big|, \tag{only considering the $j$-th coordinate}\\
        &\geq 4r -2r(1-\epsilon), \tag{by triangle inequality, $\ba[j]\neq\ba'[j]$, and $4r=1$}\\
        &\geq 2r(1+\epsilon).
      \end{align*}

    \item \underline{$j=i$ and $i\neq q$}: In this case, we have
      \begin{align*}
        \dist(w, B_{\ba}^{+i})& \geq \big|\ba'[j]-\ba[j] -r(1-\epsilon) -2\epsilon\delta \big|, \tag{only considering the $j$-th coordinate}\\
        &\geq 1 -r(1-\epsilon)-2\epsilon\delta, \tag{since $\ba[j]\neq \ba'[j]$}\\
        &\geq 2r(1+\epsilon). \tag{from~\autoref{defn:r-epsilon-general-d} and~\autoref{eqn:always-to-be-cited}}
      \end{align*}

    \item \underline{$j=q$ and $q\neq i$}: In this case, we have
      \begin{align*}
        \dist(w, B_{\ba}^{+i}) & \geq \big|\ba'[j]-\ba[j] -r(1-\epsilon) -2\epsilon\delta \big|, \tag{only considering the $j$-th coordinate}\\
        &\geq 2r(1+\epsilon). \tag{same as the previous case}
      \end{align*}
    \end{itemize}
  \end{itemize}
\end{enumerate}

We now consider the following three cases which arise when $w$ is a \textbf{secondary} point.
\begin{enumerate}
\item[(3)] \underline{$w\in \mathcal{S}_{\{\bolda,\bolda^*\}}$ for some $\bolda^*\neq (\bolda\oplus\bolde_i)$}: Let $w=S_{\{\bolda,\bolda^*\}}^{\ell}$ for some $\ell\in [\delta]$. By definition of \textsc{Secondary} (see~\autoref{table-special-sets-general-d}), the variables $\bolda$ and $\bolda^*$ must form an edge in the constraint graph $G_{\mathcal{I}}$. Since $G_{\mathcal{I}}$ is a subgraph of the $d$-dimensional grid, it suffices to consider the following three cases:
  \begin{enumerate}
  \item[(i)] \underline{$\bolda^* = \bolda\ominus \bolde_i$} : From~\autoref{table:general-d:construction}, the point $w=S_{\{\bolda^*,\bolda\}}^{\ell}$ is located at $\bolda^* \oplus \bolde_{i}\cdot \left((1-\epsilon)2r+\epsilon\ell\right) = (\bolda\ominus \bolde_i) \oplus \bolde_{i}\cdot \left((1+\epsilon)2r+\epsilon\ell\right)$ and the point $B_{\bolda}^{+i}$ is located at $\bolda \oplus \bolde_{i}\cdot r(1-\epsilon) \oplus (\bone^d-\be_i)\cdot 2\epsilon \delta$. Therefore, the distance between $w$ and $B_{\bolda}^{+i}$ is
    \begin{align*}
      &\geq \left|\left(r(1-\epsilon) + 2r -\epsilon\cdot(\ell-2r)\right)\right|, \tag{considering only the $i$-th coordinate and using $4r=1$}\\
      & = \left| (2r(1+\epsilon) +r(1-\epsilon) -\epsilon\ell) \right|, \\
      &> 2r(1+\epsilon). \tag{from $\ell \leq \delta$, \autoref{defn:r-epsilon-general-d} and \autoref{eqn:always-to-be-cited}}
    \end{align*}

  \item[(ii)] \underline{$\bolda^* = \bolda\oplus\bolde_j$ for some $j\neq i$} : From~\autoref{table:general-d:construction}, the point $w=S_{\{\bolda^*,\bolda\}}^{\ell}$ is located at $\bolda \oplus \bolde_{j}\cdot \big(2r+\epsilon\cdot(\ell-2r)\big)$ and the point $B_{\bolda}^{+i}$ is located at $\bolda \oplus \bolde_{i}\cdot (r-\epsilon/4) \oplus (\bone^d-\be_i)\cdot 2\epsilon \delta$. Therefore, square of the distance between $w$ and $B_{\bolda}^{+i}$ is
            \begin{align*}
                &\geq \big(2r+\epsilon\cdot(\ell-2r)-2\epsilon\delta\big)^{2} +(r-\epsilon/4)^2 \tag{ignoring all coordinates except $i$-th \& $j$-th}\\
                &\geq \big(2r-2\epsilon\delta\big)^{2} +(r-\epsilon/4)^2 \tag{from $\ell\geq 1$ and~\autoref{defn:r-epsilon-general-d}} \\
                &> \big(2r(1+\epsilon)\big)^2 \tag{from~\autoref{defn:r-epsilon-general-d} and~\autoref{eqn:always-to-be-cited}}
                \end{align*}

        \item[(iii)] \underline{$\bolda^* = \bolda\ominus \bolde_j$ for some $j\neq i$} : From~\autoref{table:general-d:construction}, the point $w=S_{\{\bolda^*,\bolda\}}^{\ell}$ is located at $\bolda^* \oplus \bolde_{j}\cdot \big(2r+\epsilon\cdot(\ell-2r)\big) = (\bolda\ominus \bolde_j) \oplus \bolde_{j}\cdot \big(2r+\epsilon\cdot(\ell-2r)\big)$ and the point $B_{\bolda}^{+i}$ is located at $\bolda \oplus \bolde_{i}\cdot (r-\epsilon/4) \oplus (\bone^d-\be_i)\cdot 2\epsilon \delta$. Therefore, square of the distance between $w$ and $B_{\bolda}^{+i}$ is
            \begin{align*}
                &\geq \big(2\epsilon\delta +2r-\epsilon\cdot(\ell-2r)\big)^{2} + (r-\epsilon/4)^2 \tag{only considering coordinates $i$ and $j$}\\
                &> ( 2r+\epsilon\cdot\delta)^{2}+ (r-\epsilon/4)^2 \tag{from $\ell\leq \delta$,~\autoref{defn:r-epsilon-general-d} and~\autoref{eqn:always-to-be-cited}}\\
                &> \big(2r(1+\epsilon)\big)^{2} \tag{from~\autoref{defn:r-epsilon-general-d} and~\autoref{eqn:always-to-be-cited}}
                \end{align*}
        \end{enumerate}

\item[(4)] \underline{$w\in \mathcal{S}_{\{\bolda\oplus\bolde_i,\bolda^{**}\}}$ for some $\bolda^{**}\neq \bolda$}: Let $w=S_{\{\bolda\oplus\bolde_i,\bolda^{**}\}}^{\ell}$ for some $\ell\in [\delta]$. By definition of \textsc{Secondary} (see~\autoref{table-special-sets-general-d}), the variables $\bolda\oplus\bolde_i$ and $\bolda^{**}$ must form an edge in the constraint graph $G_{\mathcal{I}}$. Since $G_{\mathcal{I}}$ is a subgraph of the $d$-dimensional grid, it suffices to consider the following three cases:
        \begin{enumerate}
        \item[(i)] \underline{$\bolda^{**} = (\bolda\oplus\bolde_i)\oplus\bolde_i = \bolda\oplus 2\bolde_i$} : From~\autoref{table:general-d:construction}, the point $w=S_{\{\bolda^{**},\bolda\oplus\bolde_i\}}^{\ell}$ is located at $(\bolda\oplus\bolde_i) \oplus \bolde_{i}\cdot \big(2r+\epsilon\cdot(\ell-2r)\big)$ and the point $B_{\bolda}^{+i}$ is located at $\bolda \oplus \bolde_{i}\cdot (r-\epsilon/4) \oplus (\bone^d-\be_i)\cdot 2\epsilon \delta$. Therefore, square of the distance between $w$ and $B_{\bolda}^{+i}$ is
                \begin{align*}
                &\geq \big(1+2r+\epsilon\cdot(\ell-2r)-(r-\epsilon/4)\big)^{2} \tag{ignoring all coordinates except $i$-th}\\
                &> \big(1+2r-(r-\epsilon/4)\big)^{2} \tag{since $\ell\geq 1> 2r$}\\
                &> \big(2r(1+\epsilon)\big)^{2} \tag{from~\autoref{defn:r-epsilon-general-d} and~\autoref{eqn:always-to-be-cited}}
                \end{align*}
        \item[(ii)] \underline{$\bolda^{**} = (\bolda\oplus \bolde_i)+\bolde_j$ for some $j\neq i$} : From~\autoref{table:general-d:construction}, the point $w=S_{\{\bolda^{**},\bolda\oplus\bolde_i\}}^{\ell}$ is located at $(\bolda\oplus\bolde_i) \oplus \bolde_{j}\cdot \big(2r+\epsilon\cdot(\ell-2r)\big)$ and the point $B_{\bolda}^{+i}$ is located at $\bolda \oplus \bolde_{i}\cdot (r-\epsilon/4) \oplus (\bone^d-\be_i)\cdot 2\epsilon \delta$. Therefore, square of the distance between $w$ and $B_{\bolda}^{+i}$ is
            \begin{align*}
              &\geq \big(1-(r-\epsilon/4)\big)^{2} \tag{ignoring all coordinates except $i$-th}\\
              &> \big(2r(1+\epsilon)\big)^2 \tag{from~\autoref{defn:r-epsilon-general-d} and~\autoref{eqn:always-to-be-cited}}
            \end{align*}

        \item[(iii)] \underline{$\bolda^{**} = (\bolda\oplus\bolde_i)\ominus\bolde_j$ for some $j\neq i$} : From~\autoref{table:general-d:construction}, the point $w=S_{\{\bolda^{**},\bolda\oplus\bolde_i\}}^{\ell}$ is located at $(\bolda\oplus\bolde_i) \ominus \bolde_{j}\cdot \big(2r+\epsilon\cdot(\ell-2r)\big)$ and the point $B_{\bolda}^{+i}$ is located at $\bolda \oplus \bolde_{i}\cdot (r-\epsilon/4) \oplus (\bone^d-\be_i)\cdot 2\epsilon \delta$. Therefore, square of the distance between $w$ and $B_{\bolda}^{+i}$ is
            \begin{align*}
              &\geq \big(1-(r-\epsilon/4)\big)^{2} \tag{ignoring all coordinates except $i$-th}\\
              &> \big(2r(1+\epsilon)\big)^2 \tag{from~\autoref{defn:r-epsilon-general-d} and~\autoref{eqn:always-to-be-cited}}
            \end{align*}
        \end{enumerate}

\item[(5)] \underline{$w\in \mathcal{S}_{\{\boldb,\boldb\oplus\bolde_j\}}$ for some $\boldb$ and index $j\in [d]$ such that $\{\boldb,\boldb\oplus\bolde_j\}\cap \{\bolda,\bolda\oplus\bolde_i\}=\emptyset$}:\\ 
    From~\autoref{table:general-d:construction}, the point $w=S_{\{\boldb,\boldb\oplus\bolde_j\}}^{\ell}$ is located at $\boldb\oplus \bolde_{j}\cdot \big(2r+\epsilon\cdot(\ell-2r)\big)$ for some $\ell\in [\delta]$ and the point $B_{\bolda}^{+i}$ is located at $\bolda \oplus \bolde_{i}\cdot (r-\epsilon/4) \oplus (\bone^d-\be_i)\cdot 2\epsilon \delta$. Since $\bolda\neq \boldb$ there exists an index $i'\in [d]$ such that $\ba[i']\neq \bb[i']$. We have the following four cases:
            \begin{enumerate}
            \item[(i)] \underline{$i'\notin \{i,j\}$} : By only considering the $i'$-th coordinate, it follows that square of the distance between $w$ and $B_{\bolda}^{+i}$ is $\geq 1>\big(2r(1+\epsilon)\big)^2$ since $r=\frac{1}{4}$ (\autoref{defn:r-epsilon-general-d}) and $\epsilon\leq 1/16$ (\autoref{eqn:always-to-be-cited}).

            \item[(ii)] \underline{$i'=i$ and $i'\neq j$} : By only considering the $i$-th coordinate, it follows that distance between $w$ and $B_{\bolda}^{+i}$ is
                    \begin{align*}
                      &\geq \big|\ba[i]+r-\epsilon/4-\bb[i]\big| \tag{only considering the $i^{\text{th}}$-coordinate}\\
                      &\geq \big|\ba[i]-\bb[i]\big| - \big|r-\epsilon/4\big| \tag{by triangle inequality}\\
                      &\geq 1 - (r-\epsilon/4) \tag{from~\autoref{defn:r-epsilon-general-d} and~\autoref{eqn:always-to-be-cited}}\\
                      &\geq 2r(1+\epsilon) \tag{from~\autoref{defn:r-epsilon-general-d} and~\autoref{eqn:always-to-be-cited}}
                    \end{align*}

            \item[(iii)] \underline{$i'=j$ and $i'\neq i$} : The  absolute value of difference between $i$-th coordinates of $w$ and $B_{\bolda}^{+i}$ is $\big|(\ba[i] + r-\epsilon/4)-\bb[i]\big|\geq (r-\epsilon/4) $ since $r=\frac{1}{4}$ and $\ba[i],\bb[i]\in [\delta]$. The  absolute value of difference between $j$-th coordinates of $w$ and $B_{\bolda}^{+i}$ is $\big|\big(\bb[j] + 2r+\epsilon(\ell-2r)\big)-\big( \ba[j] +2\epsilon\delta\big) \big|\geq \big|\ba[j]-\bb[j] \big| - \big| 2r+\epsilon(\ell-2r)-2\epsilon\delta\big| \geq \big( 1-2r- \epsilon(\ell-2r)+2\epsilon\delta\big) \geq 2r+\epsilon\delta$ where we have used the triangle inequality along with the bounds $r=\frac{1}{4}, \big|\ba[j] - \bb[j]\big|\geq 1, \ell\in [\delta]$ and $\epsilon\delta=\frac{1}{8d\delta}\leq \frac{1}{8}$. Hence, square of the distance between the points $w$ and $B_{\bolda}^{+i}$ is
                    \begin{align*}
                    &\geq \big(2r+\epsilon\delta\big)^{2} + (r-\epsilon/4)^2 \tag{only considering coordinates $i$ and $j$}\\
                    &\geq \big(2r+\epsilon\delta\big)^{2} \\
                    &\geq \big(2r(1+\epsilon)\big)^{2} \tag{from~\autoref{defn:r-epsilon-general-d} and~\autoref{eqn:always-to-be-cited}}
                    \end{align*}

            \item[(iv)] \underline{$i=i'=j$} : Since $\ba[i']\neq \bb[i']$ it follows that $|\ba[i]-\bb[i]|\geq 1$. Hence, the distance between centers of $T$ and $C_{\bolda}^{+i}$ is
                    \begin{align*}
                    &\geq \big|(\ba[i]+r-\epsilon/4)-\big(\bb[i]+2r+\epsilon\cdot (\ell-2r) \big)\big| \tag{only considering $i$-th coordinate}\\
                    &= \big| (\ba[i]-\bb[i])- \big( r+\epsilon\cdot (\ell-2r) +\epsilon/4\big) \big|\\
                    &\geq |\ba[i]-\bb[i]| - \big| r+\epsilon\cdot (\ell-2r)+\epsilon/4 \big| \tag{by triangle inequality}\\
                    &\geq 1 -\big( r+\epsilon\cdot (\ell-2r)+\epsilon/4 \big) \tag{from $|\ba[i]-\bb[i]|\geq 1, \ell\in [\delta]$ and~\autoref{defn:r-epsilon-general-d}}\\
                    &\geq 2r(1+\epsilon)  \tag{from $\ell\in [\delta]$,~\autoref{defn:r-epsilon-general-d} and~\autoref{eqn:always-to-be-cited}}
                    \end{align*}
            \end{enumerate}
\end{enumerate}
This completes the proof of the first part of the lemma. The proof of the second part of the lemma is similar, and we omit the details here.
\end{proof}

\end{document}